\pdfoutput=1
\documentclass[journal,onecolumn,11pt]{IEEEtran}
\usepackage{amsmath,amssymb,amsthm}
\usepackage{graphicx}
\usepackage{tikz}
\usetikzlibrary{arrows.meta,positioning,calc}
\usepackage[hidelinks]{hyperref}

\tikzset{
  blk/.style={draw,rectangle,align=center,inner sep=4pt,minimum height=8mm},
  sig/.style={-{Latex[length=2mm]},semithick},
  lbl/.style={font=\footnotesize,inner sep=1.5pt},
}

\newtheorem{theorem}{Theorem}
\newtheorem{lemma}[theorem]{Lemma}
\newtheorem{corollary}[theorem]{Corollary}
\newtheorem{proposition}[theorem]{Proposition}
\newtheorem{remark}{Remark}

\newcommand{\st}{\le_{\mathrm{st}}}
\newcommand{\E}{\mathbb{E}}
\newcommand{\Prob}{\mathbb{P}}
\newcommand{\R}{\mathbb{R}}
\newcommand{\ones}{\mathbf{1}}

\begin{document}

\title{The Equality Cases of the Weak Simplex Conjecture\\
\Large A Response to Mulgund's Open Problem 8.1}

\author{Mengwei~Su, Kaiwen~Yang, Hao~Xu and~Chih-Lin~I,~\IEEEmembership{Life~Fellow,~IEEE}%
\thanks{M. Su and K. Yang contributed equally to this work.}%
\thanks{M. Su is with the School of Mathematical Sciences, Tongji University, Shanghai, China (e-mail: smw@tongji.edu.cn).}%
\thanks{K. Yang is with the School of Mathematical Sciences, Tongji University, Shanghai, China (e-mail: yangkaiwen@tongji.edu.cn).}%
\thanks{H. Xu is with Tongji University, Shanghai, China (e-mail: hxu@tongji.edu.cn).}%
\thanks{Chih-Lin I is with China Mobile Research Institute, Beijing, China (e-mail: icl@chinamobile.com).}}

\maketitle

\begin{abstract}
Among $n+1$ equiprobable equal-energy signals in $\R^n$ under additive white
Gaussian noise with maximum-likelihood decoding, which arrangement maximizes
the probability of correct decoding? The question is Shannon's, recorded by
Rice in 1950. Mulgund proved in 2026 that the regular-simplex value bounds
the correct-decoding probability of every signal set at every
signal-to-noise ratio, leaving open whether the simplex is the only
maximizer. This paper determines the equality cases in a form stronger than
uniqueness. A signal set other than a regular simplex falls strictly below
the bound at every positive signal-to-noise ratio. Hence a code meeting the
bound at one positive operating point is already a regular simplex, up to
vertex relabeling and an orthogonal map. In probabilistic form, among the
correlation matrices that signal sets induce, any matrix other than the
identity gives a lower-orthant probability strictly above its independent
counterpart at every finite threshold, leaving no room for a nontrivial
equality. No code of ambient dimension below $n$ attains the
bound. Under an energy budget $E$ with unrestricted blocklength the optimal
codebook is uniquely the regular simplex of circumradius $\sqrt{E}$. Every
optimal codeword therefore exhausts its allowance. Equality in the Simplex
Mean Width Conjecture likewise occurs only at the regular simplex. The
proof strengthens the first self-convolution step of Mulgund's argument
with Royen's correlation theorem. The single-parameter rigidity is
machine-checked in Lean~4.
\end{abstract}

\begin{IEEEkeywords}
AWGN channel, maximum-likelihood decoding, simplex code, Gaussian maxima,
Gaussian correlation inequality, stochastic domination, mean width.
\end{IEEEkeywords}

\section{Introduction}

The Weak Simplex Conjecture concerns the oldest question of signal design for
the Gaussian channel. Rice reported in 1950 an observation of Shannon on the
placement of $n+1$ equiprobable equal-energy signals in $\R^n$ under additive
white Gaussian noise with maximum-likelihood decoding \cite{Rice1950}. The
conjectured answer is a regular simplex inscribed in the energy sphere,
claimed to maximize the probability of correct decoding at every
signal-to-noise ratio.
Massey gave the conjecture its name in his 1988 Shannon Lecture
\cite{Massey1988}.

Progress over the following seven decades was partial. Balakrishnan
established asymptotic and local optimality results \cite{Balakrishnan1961}.
Landau and Slepian proved optimality in low dimensions
\cite{LandauSlepian1966}. Later work explained why their argument does
not extend. The average-energy variant, the Strong Simplex Conjecture, is
false. Its disproof by Steiner \cite{Steiner1994} redirected attention to
the equal-energy formulation. Two announced proofs, by Goldsmith for the
mean-width form \cite{Goldsmith2021} and by Pastore for the coding form
\cite{Pastore2023}, contain gaps recorded in \cite[App.~B]{Mulgund2026}.
We follow \cite{Mulgund2026} in citing them as prior announcements without
using their asserted conclusions.

The problem acquired a second life in convex geometry. Balitskiy, Karasev
and Tsigler formulated the coding question in the language of Gaussian
maxima \cite{BKT17}. The Simplex Mean Width Conjecture asserts that among
simplices contained in the unit ball the regular simplex maximizes the mean
width. Kabluchko, Litvak and Zaporozhets recast this assertion as a
statement about expected maxima of correlated Gaussian variables and proved
an asymptotic form of it \cite{KLZ17}. The two questions are not on the same level. The
coding question compares moment-generating functions of the Gaussian maximum
at every signal-to-noise ratio, while the mean-width question compares
expectations alone. Expectation is the vanishing-ratio limit of the former
comparison. A resolution of the coding question therefore carries the
mean-width statement with it. The converse implication is not available. Sun, Hu and Lan
settled the four-variable case \cite{SHL20}. Litvak's survey records the
state of the art before these developments \cite{Litvak2018}.

Mulgund resolved the conjecture in 2026 \cite{Mulgund2026}. His
stochastic-domination theorem states that whenever an $(n+1)\times(n+1)$
correlation matrix satisfies $R-J/(n+1)\succeq 0$, the coordinate maximum of
$N(0,R)$ is stochastically dominated by the maximum of $n+1$ independent
standard Gaussian variables. The theorem yields optimality of the regular
simplex at every signal-to-noise ratio together with the mean-width
inequality. A machine-checked Lean~4 formalization accompanies it.
Its closing section poses Open Problem~8.1. The problem asks for the
equality cases of the lower-orthant comparison, of the moment-generating
comparison and of their coding and mean-width consequences. It asks in
particular whether equality at one nontrivial threshold or one positive
moment-generating parameter already forces $R=I$. Mulgund notes that his argument does not
answer this question, because the variational step produces a unique aligning tilt
but neither supplies the equality cases of the product inequality nor shows
that every optimal code is regular.

The classical form of this question asks for the best achievable
probability of correct decoding and for one arrangement that attains it.
The question left open asks for something different, the full set of
arrangements that attain it. The answer decides whether the regular
simplex is one good design among many or the only one.

This paper answers Open Problem~8.1 completely, in a form stronger than the
question requires. Stated in the language of the channel,
Theorem~\ref{thm:coding} says that a signal set other than a regular simplex
is strictly worse at every positive signal-to-noise ratio. Matching the
optimum at one positive operating point therefore identifies the code as a
regular simplex up to an orthogonal transformation. No code in ambient dimension $d<n$ attains
the regular-simplex upper bound. Under a finite energy budget with unrestricted blocklength the
optimal codebook is a regular simplex of full circumradius in which every
codeword spends its whole allowance. Theorem~\ref{thm:mgf} is the same
rigidity for the moment-generating comparison at a single positive
parameter. Both rest on Theorem~\ref{thm:threshold}, a strictly stronger
probabilistic fact. If $R\ne I$ then the lower-orthant probability exceeds
$\Phi(c)^{n+1}$ at every finite threshold. No threshold remains at which a
nontrivial equality could hide. Corollary~\ref{cor:smwc} settles the
equality case of the Simplex Mean Width Conjecture. The engine behind all of
these statements
is a strict version of the first normalized self-convolution step in
Mulgund's proof. For the truncated exponential factors produced by his
adaptive tilting, the self-convolution turns each factor into a symmetric
interval indicator. A strict symmetric rectangle inequality then yields a
strictly positive deficit. Royen's Gaussian correlation theorem
\cite{Royen2014} supplies the grouping step in its proof. The deficit
survives the remaining non-strict part of the argument.

Two consequences are worth stating for signal design. The regular simplex is
not one optimum among several. Every departure from it costs, at every
positive signal-to-noise ratio. A numerical search over constellations therefore
cannot terminate on a tie. Any tie it reports is an artifact of the
numerics. The finite-energy statement sits in the model used
for non-asymptotic analysis of short codes \cite{PPV11}, where the message
set is small and the binding constraint is the energy budget rather than the
blocklength. In that model the optimal codebook is now pinned exactly.
Moreover, the requirement that every codeword spend its full allowance is a
consequence rather than a design choice.

One ingredient may be of independent interest.
Lemma~\ref{lem:rectangle} sharpens the classical symmetric rectangle
inequality of \v{S}id\'ak and Khatri \cite{Sidak1967,Khatri1967}. For a
centered Gaussian vector with standard marginals and correlation matrix
$S\ne I$, the probability of a symmetric rectangle strictly exceeds the
product of its coordinate probabilities at every finite choice of radii.
Singular $S$ is allowed. The classical form gives only a weak
inequality. Strictness does not follow from it by approximation, since
a limit of strict inequalities need not stay strict. The proof isolates a
two-dimensional strict gap and carries it through the remaining coordinates
with Royen's theorem.

Section~II sets up the channel model and the two translations that carry the
coding question into probability and into convex geometry. Section~III states
the main results. Section~IV fixes the inputs quoted from \cite{Mulgund2026}
and from Royen's theorem. Section~V proves the two strict inequalities that carry the
strictness. Section~VI proves the theorems and the corollary. Section~VII
concludes. The appendix holds the sufficient-statistic reduction used for
the infinite-blocklength model.

\section{Channel Model and Two Translations}

This section fixes the channel and records the two changes of language that
the rest of the paper relies on. The coding question becomes an extremal
problem for the maximum of a correlated Gaussian vector. That maximum is in
turn the mean width of the simplex spanned by the signals. Both
translations are known. They are restated here because the theorems are
stated in the first language and proved in the second.

\subsection{The coding problem}

\begin{figure}[!t]
\centering
\begin{tikzpicture}
  \node[blk] (src) {Message\\source};
  \node[blk,right=11mm of src] (map) {Signal\\mapping};
  \node[draw,circle,right=13mm of map,inner sep=1.5pt] (add) {$+$};
  \node[blk,right=13mm of add] (dec) {Maximum-likelihood\\decoder};
  \coordinate[right=11mm of dec] (out);

  \draw[sig] (src) -- node[lbl,above]{$I$} (map);
  \draw[sig] (map) -- node[lbl,above]{$\lambda x_I$} (add);
  \draw[sig] (add) -- node[lbl,above]{$Y$} (dec);
  \draw[sig] (dec) -- node[lbl,above]{$\hat I$} (out);
  \node[lbl,below=8mm of add] (z) {$Z\sim N(0,I_d)$};
  \draw[sig] (z) -- (add);

  \begin{scope}[shift={($(src.south)+(6mm,-13mm)$)},scale=0.5]
    \draw (0,0) circle (1);
    \draw (90:1) -- (210:1) -- (330:1) -- cycle;
    \foreach \a in {90,210,330} \fill (\a:1) circle (0.08);
  \end{scope}
  \node[lbl,anchor=west] at ($(src.south)+(14mm,-13mm)$)
       {equal-energy signals lie on one sphere};
\end{tikzpicture}
\caption{The channel model of \eqref{eq:channel}. One of $n+1$ equiprobable
messages selects a unit-energy signal, the channel adds independent Gaussian
noise. The decoder maximizes the likelihood. The signal set for $n=2$ is
drawn on the left.}
\label{fig:system}
\end{figure}

A transmitter sends one of $m=n+1$ equiprobable messages over the real AWGN
channel by mapping message $i$ to a signal $x_i\in\R^d$ of unit energy, as
in Fig.~\ref{fig:system}. The receiver observes
\begin{equation}\label{eq:channel}
Y=\lambda x_I+Z,\qquad Z\sim N(0,I_d),\ \lambda>0,
\end{equation}
where $\lambda^2$ is the signal-to-noise ratio. The receiver decodes by
maximum likelihood. Completing the square in the Gaussian likelihood turns the
decision statistic into the inner product $\langle Y,x_i\rangle$. The
average probability of correct decoding becomes
\begin{equation}\label{eq:pc-identity}
P_c(x_1,\dots,x_m;\lambda)=\frac{e^{-\lambda^2/2}}{m}\,
\E\exp\Bigl\{\lambda\max_{1\le i\le m}\xi_i\Bigr\},\qquad
\xi\sim N(0,G),\ G_{ij}=\langle x_i,x_j\rangle .
\end{equation}
The derivation runs inside the span of the signals and does not require $G$
invertible. Identity \eqref{eq:pc-identity} holds for every measurable
likelihood-maximizing decoder \cite[Cor.~2.6]{Mulgund2026}.

\subsection{From coding to Gaussian maxima}

Identity \eqref{eq:pc-identity} says that the code enters the error
probability only through the Gram matrix $G$ and that the design problem is
an extremal problem for the moment-generating function of the maximum of a
correlated Gaussian vector. Each signal set becomes a correlation matrix.
Each signal-to-noise ratio becomes a moment-generating parameter. The
regular simplex becomes the Gram matrix $G_\Delta$ of \eqref{eq:gdelta}
below. A normalization that adds one common Gaussian coordinate to every
signal, the identity \eqref{eq:mgf-transfer} below, further maps $G_\Delta$ to the identity
matrix. The question of whether the simplex is optimal thus turns into the
question of whether correlated Gaussian coordinates can beat independent
ones at their own maximum.

\subsection{From Gaussian maxima to mean width}

The same quantity is a convex-geometric functional. For a simplex $K$ with
vertices $y_1,\dots,y_m$ the support function is
$h_K(g)=\max_i\langle g,y_i\rangle$. Integrating in polar coordinates
gives $\E h_K(g)=\E\|g\|\,w(K)/2$ for $g\sim N(0,I_n)$, where $w$ denotes
mean width. The expected Gaussian maximum is therefore the mean width up to
a dimensional constant. This equivalence is due to Kabluchko,
Litvak and Zaporozhets \cite{KLZ17}. In this dictionary the low
signal-to-noise limit of \eqref{eq:pc-identity} is the mean-width
functional. Hence the Simplex Mean Width Conjecture is the $\lambda\to 0$ face
of the coding problem. A statement proved for every $\lambda>0$
automatically covers it.

\begin{figure}[!t]
\centering
\begin{tikzpicture}[x=1mm,y=1mm,every node/.style={font=\scriptsize}]
  \foreach \x in {0,48,96} \draw (\x,0) rectangle (\x+32,-36);
  \foreach \x/\t in {0/Communication, 48/Probability, 96/Geometry}
    \node[anchor=north] at (\x+16,-1) {\textsc{\t}};
  \node[anchor=north west,align=left] at (2,-6)
    {AWGN channel\\ $n+1$ equal-energy\\ signals\\ ML decoding};
  \node[anchor=north west,align=left] at (50,-6)
    {Gram matrix $G$\\ correlated Gaussian\\ coordinates\\ maximum of them};
  \node[anchor=north west,align=left] at (98,-6)
    {simplex inscribed\\ in the ball\\ mean width};
  \begin{scope}[shift={(16,-28)}]
    \foreach \a in {90,210,330} \fill (\a:3) circle (0.45);
    \foreach \p in {(1,1),(-1.6,1.4),(1.7,-1.5),(-0.9,-1.7)}
      \fill[gray] \p circle (0.25);
  \end{scope}
  \begin{scope}[shift={(64,-28)}]
    \foreach \i in {0,1,2,3} \fill (\i*3-4.5,0) circle (0.45);
    \draw[semithick] (-4.5,0) to[bend left=50] (-1.5,0);
    \draw[semithick] (-1.5,0) to[bend left=50] (4.5,0);
    \draw (4.5,0) circle (1.3);
  \end{scope}
  \begin{scope}[shift={(112,-28)}]
    \draw (0,0) circle (4.5);
    \draw (90:4.5) -- (210:4.5) -- (330:4.5) -- cycle;
    \foreach \a in {90,210,330} \fill (\a:4.5) circle (0.45);
  \end{scope}
  \draw[sig] (32,-28) -- node[above,align=center]{same Gram\\ matrix} (48,-28);
  \draw[sig] (80,-28) -- node[above,align=center]{vanishing\\ SNR} (96,-28);
  \draw (0,-41) rectangle (128,-49);
  \node[font=\footnotesize] at (64,-45)
    {The regular simplex is the strictly unique optimum in all three};
\end{tikzpicture}
\caption{The same extremal problem in three languages. A signal set with
Gram matrix $G$ fixes the correlation structure of a Gaussian vector, whose
maximum controls both the probability of correct decoding at a given
signal-to-noise ratio and, in the vanishing signal-to-noise limit, the mean
width of the simplex spanned by the signals. This paper proves that the
regular simplex is the strictly unique optimum in every column.}
\label{fig:dictionary}
\end{figure}

The three columns of Fig.~\ref{fig:dictionary} are the same optimization
seen from communication, probability and geometry. The results below are
stated in the first language and proved in the second. In the third they
follow as corollaries.

\section{Main Results}

Fix $n\ge 1$ and work with $n+1$ coordinates. Let $\phi$, $\Phi$ denote the
standard normal density and distribution function, $\ones=(1,\dots,1)^{\mathsf T}$,
$J=\ones\ones^{\mathsf T}$, $\alpha=n/(n+1)$, and let
\begin{equation}\label{eq:gdelta}
G_\Delta=\frac{n+1}{n}\Bigl(I-\frac{1}{n+1}J\Bigr)
\end{equation}
be the Gram matrix of the centered regular $n$-simplex inscribed in the unit
sphere.

\begin{theorem}[Single-threshold strictness]\label{thm:threshold}
Let $R$ be an $(n+1)\times(n+1)$ correlation matrix with
$R-J/(n+1)\succeq 0$, possibly singular, and let $X\sim N(0,R)$. If
$R\ne I$, then for every finite $c\in\R$,
\[
\Prob\{X\le c\ones\}>\Phi(c)^{n+1}.
\]
Consequently $\Prob\{X\le c\ones\}=\Phi(c)^{n+1}$ at some finite $c$ if and
only if $R=I$.
\end{theorem}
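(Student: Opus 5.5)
The plan is to rerun Mulgund's stochastic-domination argument \cite{Mulgund2026} for the lower orthant $\{X\le c\ones\}$. We keep every step of it except the first normalized self-convolution, which we replace by a strict version. Since $R-J/(n+1)\succeq0$, we can write $X\overset{d}{=}U\ones/\sqrt{n+1}+Y$. Here $U\sim N(0,1)$ is independent of $Y\sim N(0,R-J/(n+1))$. Then $\Prob\{X\le c\ones\}=\E_U\,\Prob\{Y\le (c-U/\sqrt{n+1})\ones\}$. The case $R=I$ corresponds to $Y\sim N(0,\alpha G_\Delta)$, and for it every inequality in Mulgund's chain is an equality. The ``consequently'' clause is immediate from the strict inequality, because at $R=I$ both sides equal $\Phi(c)^{n+1}$. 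So the whole task is the strict inequality for $R\ne I$ at a fixed finite $c$.

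\textbf{Step 1: a strict \v{S}id\'ak--Khatri inequality (Lemma~\ref{lem:rectangle}).} Let $Z\sim N(0,S)$ with unit diagonal and $S\ne I$, possibly singular. We want to show $\Prob\{|Z_i|\le a_i\ \forall i\}>\prod_i(2\Phi(a_i)-1)$ for all finite $a_i>0$. Pick a pair, say $(1,2)$, with $\rho=S_{12}\ne0$.
\begin{itemize}
\item In two dimensions, $\rho\mapsto \Prob\{|Z_1|\le a_1,|Z_2|\le a_2\}$ is even in $\rho$. Its $\rho$-derivative, by Plackett's identity, is a signed sum of the bivariate density at the four corners. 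This derivative is strictly positive for $\rho\in(0,1)$. We also check the degenerate endpoint $|\rho|=1$ directly: there the probability is $2\Phi(\min a_i)-1$, which exceeds the product. Together these give a strict two-dimensional gap.
\item Royen's theorem \cite{Royen2014}, applied to the symmetric convex sets $\{|z_1|\le a_1,|z_2|\le a_2\}$ and $\{|z_i|\le a_i,\ i\ge3\}$, splits off this block.
\item The classical weak \v{S}id\'ak inequality then finishes the remaining coordinates.
\end{itemize}
Since no limit is taken, strictness is not lost.

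\textbf{Step 2: locating the strictness in Mulgund's chain.} After the adaptive tilt, the integrand in Mulgund's argument becomes a product of truncated exponential factors $e^{t y_i}\mathbf 1\{y_i\le b\}$ evaluated against the law of $Y$. The first normalized self-convolution replaces $Y$ by $(Y-Y')/\sqrt2$ with $Y'$ an independent copy. It turns each factor into a symmetric interval indicator (times a positive weight), and at that point Mulgund uses \v{S}id\'ak's weak inequality. We substitute Step~1. The correlation matrix $S$ of $Y$ is compared with that of the $R=I$ case, $G_\Delta$. If $R\ne I$, then $R-J/(n+1)\ne\alpha G_\Delta$. After the normalization that turns $G_\Delta$ into the identity (the map behind \eqref{eq:mgf-transfer}), the relevant correlation matrix therefore differs from $I$, which is exactly the hypothesis of Lemma~\ref{lem:rectangle}. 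For every $U$ and every admissible tilt this yields a strictly positive deficit at finite radii. The radii are finite precisely because $c$ is finite, so the truncation level $b$ is finite.

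\textbf{Step 3: propagating the deficit.} The remaining steps of Mulgund's argument are non-strict. They are further self-convolutions, the variational choice of the aligning tilt, and integration in $U$. We check that each is monotone with strictly positive weights. The deficit is then integrated against a positive density in $U$ and multiplied by positive normalizing constants, so it survives as a strict inequality $\Prob\{X\le c\ones\}>\Phi(c)^{n+1}$.

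I expect this step to be the main obstacle. One has to verify that the later steps never degenerate, that is, that the tilt parameters stay finite and the weights never vanish on a set carrying the gap. In particular the aligning tilt is chosen after the fact, so the deficit must hold for every tilt, not just a fixed one. My first attempt would be to state Step~1 as a strict inequality uniformly over all finite radii, and to note that the deficit is a continuous positive function of $(U,t)$. Positivity on a set of positive $U$-measure then suffices.
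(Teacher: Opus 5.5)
Your Step~1 is correct and is essentially Lemma~\ref{lem:rectangle}. The Plackett-derivative argument is a valid substitute for the paper's covariance argument with the strictly decreasing conditional probability $q$, and your treatment of the $|\rho|=1$ endpoint and of the Royen grouping matches the paper's.

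\textbf{The gap: strictness is injected at the wrong level.} The gap is in Steps~2--3, and it starts with where you inject strictness. Lemma~\ref{lem:rectangle} and Mulgund's product inequality compare against \emph{independent} coordinates of the vector they are applied to. So they must be applied to $X\sim N(0,R)$ itself, whose independent baseline is exactly the case $R=I$. Your stripped vector $Y\sim N(0,R-J/(n+1))$ has correlation matrix $(R-J/(n+1))/\alpha$, which equals $G_\Delta\neq I$ at $R=I$. Any chain that injects strictness through ``the correlation of $Y$ is not $I$'' and closes at $\Phi(c)^{n+1}$ would therefore give a strict inequality at $R=I$, where equality holds. Your ``normalization that turns $G_\Delta$ into $I$'' simply adds the common component back and returns you to $R$. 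So the decomposition $X=U\ones/\sqrt{n+1}+Y$ and the ``integration in $U$'' should be dropped. Mulgund's chain, as quoted, conditions on no common factor, and integrating in $U$ would need an unsmoothed comparison of $\max_i\xi_i$ with $\max_i\xi^\Delta_i$ that neither your plan nor the quoted inputs supply. Also, ``for every $U$ a strictly positive deficit'' is false: where some radius $\sqrt2\,b_i-u_i$ is nonpositive, the deficit is zero.

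\textbf{What is missing: a concrete chain with a fixed tilt.} For the given $c$, Proposition~\ref{prop:tilt} supplies finite $s_i$ with $a_i=r(s_i)>0$, $b_i=H(s_i)>0$ and $b-Ra=c\ones$. The chain then runs as follows.
\begin{enumerate}
\item Proposition~\ref{prop:com} gives $\Prob\{X\le c\ones\}=e^{-a^{\mathsf T}Ra/2}\,\E[e^{a^{\mathsf T}X}\ones_{\{X\le b\}}]$.
\item Lemma~\ref{lem:strict-product} with $S=R$ bounds that expectation strictly below by $e^{\|a\|^2/2}\prod_i\Phi(s_i)$.
\item The variational inequality in \eqref{eq:alignment} finishes non-strictly, and the positive change-of-measure factor preserves strictness.
\end{enumerate}
The tilt is fixed before the strict inequality is invoked, and Lemma~\ref{lem:strict-product} holds for every admissible $(a_i,b_i)$ anyway. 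So the obstacle you single out (``the tilt is chosen after the fact'') does not arise.

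\textbf{The propagation step you leave open.} What you dismiss as ``each later step is monotone with positive weights'' is the real step, and the paper does it explicitly. Take independent $Y,Y'\sim N(0,R)$ and set $U=(Y+Y')/\sqrt2$, $V=(Y-Y')/\sqrt2$; these are self-convolution variables, not your common factor. Then $Q_0^2-Q_1=\E\,\Delta(U)>0$, because $\Delta>0$ on $E=\{u_i<\sqrt2\,b_i\ \forall i\}$ and $\Prob\{U\in E\}>0$. This needs $b_i>0$, not merely $b_i<\infty$, and the distinction matters when $R$ is singular. Next, $Q_1\ge 1$ by Proposition~\ref{prop:product} applied to the $Dh_i$. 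These are bounded, log-concave by Pr\'ekopa, of unit Gaussian mass and of zero first moment. Hence $Q_0>1$.
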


\begin{theorem}[Single-parameter rigidity of the MGF comparison]\label{thm:mgf}
Let $G$ be any $(n+1)\times(n+1)$ correlation matrix, $\xi\sim N(0,G)$ and
$\xi^\Delta\sim N(0,G_\Delta)$. For every $\lambda>0$,
\[
\E e^{\lambda\max_i\xi_i}\le\E e^{\lambda\max_i\xi^\Delta_i},
\]
and equality holds at some, equivalently every, $\lambda>0$ if and only if
$G=G_\Delta$. If $G\ne G_\Delta$ the inequality is strict at every
$\lambda>0$.
\end{theorem}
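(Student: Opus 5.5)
Theorem~\ref{thm:mgf} reduces to Theorem~\ref{thm:threshold} through the normalization identity \eqref{eq:mgf-transfer}, which adds a common Gaussian coordinate to every entry.

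Given a correlation matrix $G$, set $R=\alpha G+(1-\alpha)J$ with $\alpha=n/(n+1)$. Concretely, realize $X=\sqrt{\alpha}\,\xi+\sqrt{1-\alpha}\,W\ones$ with $W\sim N(0,1)$ independent of $\xi$. Then $R$ has unit diagonal, and
\[
R-J/(n+1)=\alpha G+\Bigl(1-\alpha-\tfrac{1}{n+1}\Bigr)J=\alpha G\succeq 0,
\]
because $1-\alpha=1/(n+1)$. Moreover $R=I$ if and only if $\alpha G=I-J/(n+1)$, that is, if and only if $G=G_\Delta$. This correspondence $G\mapsto R$ is therefore a bijection between correlation matrices and the admissible $R$ of Theorem~\ref{thm:threshold}, and it sends $G_\Delta$ to $I$.

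On the level of moment-generating functions we have $\max_i X_i=\sqrt{\alpha}\max_i\xi_i+\sqrt{1-\alpha}\,W$. Independence then gives
\[
\E e^{\mu\max_i X_i}=e^{\mu^2(1-\alpha)/2}\,\E e^{\mu\sqrt{\alpha}\max_i\xi_i}.
\]
Comparing $G$ with $G_\Delta$ at $\lambda=\mu\sqrt{\alpha}$ is thus equivalent to comparing $M=\max_i X_i$ with $M^I$, the maximum of $n+1$ independent standard normals, at parameter $\mu>0$. This is the content of \eqref{eq:mgf-transfer}, which I take as given.

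Next I pass from lower-orthant probabilities to moment-generating functions. Theorem~\ref{thm:threshold} says that $F_M(c)=\Prob\{X\le c\ones\}$ satisfies $F_M(c)\ge\Phi(c)^{n+1}=F_{M^I}(c)$ for all $c$, so $M\st M^I$. For $R\ne I$ this inequality is strict at every finite $c$. Integrating by parts,
\[
\E e^{\mu M}=\int_\R \mu e^{\mu c}\,\Prob\{M>c\}\,dc,
\]
and likewise for $M^I$. Both integrals are finite, since Gaussian maxima have Gaussian tails. Hence
\[
\E e^{\mu M^I}-\E e^{\mu M}=\int_\R \mu e^{\mu c}\bigl(F_M(c)-\Phi(c)^{n+1}\bigr)\,dc\ \ge 0 .
\]
When $R\ne I$ the integrand is continuous, nonnegative, and strictly positive everywhere. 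The integral is therefore strictly positive for every $\mu>0$.

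Translating back through \eqref{eq:mgf-transfer}, the weak inequality holds for every $G$ and every $\lambda>0$, and it is strict at every $\lambda>0$ whenever $G\ne G_\Delta$. If $G=G_\Delta$ equality is trivial. This yields the full equivalence: equality at some $\lambda>0$ holds if and only if equality holds at every $\lambda>0$, if and only if $G=G_\Delta$.

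The main obstacle is already absorbed into Theorem~\ref{thm:threshold}, namely strictness at every threshold rather than at merely one. Without that pointwise strictness one would need continuity plus positivity on a set of positive measure. The remaining care points are two. First, the integration-by-parts identity must hold for possibly singular $G$; it does, since it uses only the distribution of the scalar $M$. Second, the mixture representation must be checked to reproduce exactly the covariance $R$, which the computation above does.
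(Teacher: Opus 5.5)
Your proposal is correct and follows the paper's proof essentially step for step. You use the same common-coordinate lift $R=\alpha G+J/(n+1)$ (your $\sqrt{1-\alpha}\,W$ is the paper's $\sqrt{\alpha}\,B$ with $B\sim N(0,1/n)$) and the transfer identity \eqref{eq:mgf-transfer}. Your tail-integral representation of the MGF becomes exactly the paper's formula after the substitution $t=e^{\mu c}$, and pointwise strictness from Theorem~\ref{thm:threshold} then gives a strictly positive gap at every $\mu>0$.
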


\begin{theorem}[Coding consequences]\label{thm:coding}
Consider $n+1$ equiprobable messages and unit signals $x_1,\dots,x_{n+1}\in\R^d$
observed as $Y=\lambda x_I+Z$, $Z\sim N(0,I_d)$, $\lambda>0$.
\begin{enumerate}
\item[(a)] The probability of correct Bayes-optimal decoding attains the
regular-simplex value at some $\lambda>0$ if and only if the code is a
centered regular $n$-simplex inscribed in the unit sphere, up to vertex
relabeling and orthogonal maps of the ambient space. If $d<n$ the
regular-simplex bound is strict for every code and every $\lambda>0$.
\item[(b)] In the deterministic no-feedback infinite-sequence AWGN model
with noise variance $N_0/2>0$, $n+1$ equiprobable messages, unrestricted
channel uses and per-codeword energy at most $E>0$, the optimal probability
of correct decoding equals $S_{\mathrm{reg}}(\sqrt{E})$, the regular-simplex
value at circumradius $\sqrt{E}$, and a codebook attains it if and only if
$\sum_i c_i=0$, $\|c_i\|^2=E$ and $\langle c_i,c_j\rangle=-E/n$ for
$i\ne j$. In particular every optimal codeword uses full energy.
\end{enumerate}
\end{theorem}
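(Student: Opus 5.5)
I prove Theorem~\ref{thm:coding} by reducing both parts to Theorem~\ref{thm:mgf} through identity \eqref{eq:pc-identity}.

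For part (a), let $G$ be the Gram matrix of $x_1,\dots,x_{n+1}$. It is a correlation matrix, since the signals are unit vectors. By \eqref{eq:pc-identity} the code enters $P_c$ only through $G$, through the factor $\E e^{\lambda\max_i\xi_i}$ multiplied by the positive constant $e^{-\lambda^2/2}/(n+1)$. The regular-simplex value is this factor evaluated at $G_\Delta$. Equality of $P_c$ with that value at one $\lambda>0$ is therefore equality in Theorem~\ref{thm:mgf} at one parameter, and this forces $G=G_\Delta$.

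The remaining step is to turn equality of Gram matrices into congruence. Two finite families in $\R^d$ with the same Gram matrix differ by a linear isometry between their spans, and this isometry extends to an orthogonal map of $\R^d$. The vertex relabeling is absorbed because $G_\Delta$ is invariant under simultaneous permutation of rows and columns. A family with Gram matrix $G_\Delta$ is a centered regular $n$-simplex. Indeed $\|\sum_i x_i\|^2=\ones^{\mathsf T}G_\Delta\ones=0$, and the pairwise inner products all equal $-1/n$.

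For the dimension statement, note that $G_\Delta$ has rank $n$, since $G_\Delta\ones=0$ and $G_\Delta$ acts as $\tfrac{n+1}{n}$ times the identity on $\ones^\perp$. Any Gram matrix of vectors in $\R^d$ has rank at most $d$. If $d<n$, then $G\ne G_\Delta$, and the strict half of Theorem~\ref{thm:mgf} gives strict inequality at every $\lambda>0$.

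For part (b), I use the sufficient-statistic reduction from the appendix. A codebook of finite-energy sequences $c_i\in\ell^2$ reduces to the finite-dimensional span of $c_1,\dots,c_{n+1}$, which has dimension at most $n+1$. Whitening the noise by $\sqrt{N_0/2}$ reduces the model to \eqref{eq:channel}, with Gram matrix $\langle c_i,c_j\rangle\cdot 2/N_0$. The required identity is
\[
P_c=\frac{1}{n+1}\E\exp\Bigl\{\max_i\bigl(\eta_i-\tfrac12 \|c_i\|^2\cdot \tfrac{2}{N_0}\bigr)\Bigr\},
\]
where $\eta$ is Gaussian with the scaled Gram covariance.

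The key step is monotonicity in the energies. I would show that $P_c$ with $\|c_i\|^2\le E$ is at most $P_c$ for some equal-energy code with $\|c_i\|^2=E$, with equality only if all energies already equal $E$. My first attempt is a lifting argument. Append orthogonal components to each $c_i$ along fresh, mutually orthogonal coordinates, which the unrestricted blocklength provides, so that every norm becomes exactly $\sqrt{E}$. This leaves the cross inner products unchanged. A fresh coordinate used only by codeword $i$ adds an independent noise term, which tilts in favor of $i$ exactly when $i$ is sent. I would then verify directly that adding energy in an orthogonal direction never decreases $P_c$: adding a private dimension to one codeword is the standard fact that extra energy on an orthogonal coordinate weakly increases the ML success probability. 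Strictness I would obtain from the Gaussian density, since the affected decision region strictly grows on a set of positive measure.

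After lifting, scale to unit signals with $\lambda=\sqrt{2E/N_0}$ and apply part (a). This shows the optimum is $S_{\mathrm{reg}}(\sqrt E)$ and identifies the attaining Gram matrices as $E\,G_\Delta$, which are exactly the conditions $\sum_i c_i=0$, $\|c_i\|^2=E$ and $\langle c_i,c_j\rangle=-E/n$.

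I expect the main obstacle to be the strict energy-monotonicity step. It needs care both for strictness and for making the lifting rigorous in the infinite-sequence model.
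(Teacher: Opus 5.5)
Part (a) of your proposal coincides with the paper's argument: identity \eqref{eq:pc-identity}, Theorem~\ref{thm:mgf}, Gram rigidity, and the rank count $\mathrm{rank}\,G\le d<n=\mathrm{rank}\,G_\Delta$.

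Part (b) is correct in outline and shares the paper's lifting idea: $S(C)\le S(\tilde C)$ because a decoder may ignore the added coordinates. It handles the equality case differently, however. The paper appends a single common coordinate $q_i=\sqrt{E-\|c_i\|^2}\ge 0$ and never proves strict energy monotonicity. If $C$ is optimal, then $S(C)=S(\tilde C)=S_{\mathrm{reg}}(\sqrt E)$. Theorem~\ref{thm:mgf} then forces the lifted code to be a centered regular simplex, and the last coordinate of $\sum_i\tilde c_i=0$ reads $\sum_iq_i=0$, so every $q_i=0$.

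So the step you flag as the main obstacle can be avoided entirely. With your private coordinates the same trick is even simpler, since $\|\sum_j\tilde c_j\|^2=\|\sum_jc_j\|^2+\sum_jq_j^2$ must vanish. Without a strict lemma, your last sentence only identifies the Gram matrix $E\,G_\Delta$ of the \emph{lifted} code. This centering observation is exactly what transfers the stated conditions back to $C$.

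If you keep the strict lemma, it is true, and it gives a stronger intermediate fact: any codeword below full energy is strictly wasteful, for every code. But ``the decision region strictly grows'' is not the right mechanism. The region of $i$ gains points with large $y'$ and loses points with very negative $y'$. The correct argument is a strict Jensen gap for the posteriors:
\begin{itemize}
\item $\E[\pi_j(Y,Y')\mid Y]=\pi_j(Y)$.
\item All posteriors are strictly positive.
\item The private likelihood ratio $\exp\{(q_iy'-q_i^2/2)/\sigma^2\}$ is unbounded in both directions.
\end{itemize}
Hence no single index is almost surely the argmax given $Y$, and $\E[\max_j\pi_j(Y,Y')\mid Y]>\max_j\pi_j(Y)$.

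Finally, the infinite-sequence rigor is routine. You can lift after projecting onto the finite-dimensional span of the codewords; this preserves the Gram matrix, which by Lemma~\ref{lem:sufficient} determines the optimal success probability, and the result is realizable on finitely many channel uses. Alternatively, as the paper does, free up coordinates by a fixed rearrangement, which leaves the i.i.d.\ noise law invariant.
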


\begin{corollary}[Equality case of the Simplex Mean Width Conjecture]\label{cor:smwc}
Let $K\subset B_2^n$ be an $n$-dimensional simplex and $\Delta_n$ a regular
simplex inscribed in $S^{n-1}$ with centroid at the origin. Then
$w(K)\le w(\Delta_n)$, and $w(K)=w(\Delta_n)$ if and only if $K=U\Delta_n$
for some $U\in O(n)$. The inequality itself is \cite[Cor.~2.5]{Mulgund2026}.
Only the equality case is proved here.
\end{corollary}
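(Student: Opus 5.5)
We argue through the Gaussian-maximum dictionary of Section~II and apply Theorem~\ref{thm:threshold} directly, not the $\lambda\to0$ limit of Theorem~\ref{thm:mgf}. The reason is that strict inequalities at every $\lambda>0$ need not stay strict in the limit. Let $y_1,\dots,y_{n+1}$ be the vertices of $K$, let $g\sim N(0,I_n)$, and put $\xi_i=\langle g,y_i\rangle$. By the Kabluchko--Litvak--Zaporozhets identity, $w(K)$ is a fixed positive multiple of $\E\max_i\xi_i$. The inequality $w(K)\le w(\Delta_n)$ is \cite[Cor.~2.5]{Mulgund2026}, so only the case of equality $\E\max_i\xi_i=\E\max_i\xi_i^\Delta$ needs treatment, where $\xi^\Delta\sim N(0,G_\Delta)$.

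\emph{Step 1: unit vertices.} The vertices satisfy $\|y_i\|\le1$. Let $\eta\sim N(0,I_{n+1})$ be independent of $g$, and set $\tilde\xi_i=\xi_i+\sqrt{1-\|y_i\|^2}\,\eta_i$. This vector has unit variances, so its covariance $\tilde G$ is a correlation matrix. Conditional Jensen applied to the convex function $\max$ gives $\E\max\tilde\xi\ge\E\max\xi$, and Mulgund's inequality gives $\E\max\tilde\xi\le\E\max\xi^\Delta$. Equality throughout therefore forces $\E\max\tilde\xi=\E\max\xi$.

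I must then show that this equality forces $\|y_i\|=1$ for all $i$. Suppose $\|y_k\|<1$ for some $k$. Because $K$ is $n$-dimensional, the event on which $\xi_k$ is the strict maximum, with a positive margin, has positive probability. On that event the conditional expectation of $\max_i(\xi_i+\text{noise}_i)$ strictly exceeds $\max_i\xi_i$, because the noise on coordinate $k$ is nondegenerate and can push $\xi_k$ below the runner-up with positive probability. This is the step I expect to need the most care, namely a clean strict-Jensen argument with possibly degenerate noise on the other coordinates. Granting it, every $\|y_i\|=1$, and $G=(\langle y_i,y_j\rangle)$ is a correlation matrix.

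\emph{Step 2: transfer to $R$.} Let $Z\sim N(0,1)$ be independent of $\xi$ and set $X=\sqrt{\alpha}\,\xi+(n+1)^{-1/2}Z\ones$. Its covariance is $R=\alpha G+J/(n+1)$, which has unit diagonal and satisfies $R-J/(n+1)=\alpha G\succeq0$. The identity \eqref{eq:mgf-transfer} gives $\E\max_iX_i=\sqrt{\alpha}\,\E\max_i\xi_i$, and the same construction sends $G_\Delta$ to $I$. Moreover $R=I$ if and only if $G=G_\Delta$, since $\alpha G_\Delta+J/(n+1)=I$ and the map $G\mapsto R$ is injective.

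\emph{Step 3: strictness.} Let $W$ be a vector of $n+1$ independent standard Gaussians. Writing expectations through distribution functions,
\[
\E\max_iX_i-\E\max_iW_i=\int_{-\infty}^{\infty}\Bigl(\Phi(c)^{n+1}-\Prob\{X\le c\ones\}\Bigr)\,dc .
\]
Both maxima are integrable, so this integral converges. If $R\ne I$, Theorem~\ref{thm:threshold} makes the integrand strictly negative at every $c$, and it is continuous in $c$. The integral is therefore strictly negative, which gives $w(K)<w(\Delta_n)$. Hence equality forces $R=I$, that is, $G=G_\Delta$.

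\emph{Step 4: conclusion.} The vertices $y_i$ and the vertices of $\Delta_n$ have the same Gram matrix and both span $\R^n$. Two spanning families with equal Gram matrices differ by an orthogonal map, so $K=U\Delta_n$ for some $U\in O(n)$. Conversely, every such $K$ has $w(K)=w(\Delta_n)$ because mean width is orthogonally invariant.
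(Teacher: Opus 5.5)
Your proposal is correct and follows the paper's proof essentially step for step. It uses the same lift $x_i=(y_i,\sigma_ie_i)$ with a strict Jensen gap to force every vertex onto the sphere, the same transfer $R=\alpha G+J/(n+1)$, Theorem~\ref{thm:threshold} integrated over the distribution-function gap, and Gram rigidity. The strict-Jensen step you flag closes exactly as you sketch it, because conditionally on $g$ the difference $\tilde\xi_j-\tilde\xi_k$ has variance at least $1-\|y_k\|^2>0$ however degenerate the other noises are. The paper instead uses a pointwise two-case comparison of $\E\max(Y,C)$ with $\max(\E Y,C)$, which gives strictness at every conditioning point and does not need the positive-probability event on which $\xi_k$ is the maximum.
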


\subsection{A worked example}

One family makes the statements above concrete and is computable in closed
form. Call a signal set equicorrelated when all pairwise inner products
share one value $\rho$. The family is feasible for $-1/n\le\rho<1$ and
contains the regular simplex exactly at $\rho=-1/n$. Sending $x_1$ and comparing scores,
the $n$ differences are
$\langle Y,x_1-x_j\rangle=\lambda(1-\rho)+\langle Z,x_1-x_j\rangle$. The
Gaussian part has covariance $(1-\rho)(I+J)$. It can be written as
$\sqrt{1-\rho}\,(V+U_j)$ with $V,U_1,\dots,U_n$ independent standard normal.
Dividing by $\sqrt{1-\rho}$ leaves
\begin{equation}\label{eq:worked}
P_c=\E\,\Phi\bigl(W+\lambda\sqrt{1-\rho}\bigr)^{n},\qquad W\sim N(0,1).
\end{equation}
Since $\Phi$ is increasing, \eqref{eq:worked} is strictly increasing in
$\lambda\sqrt{1-\rho}$. The factor $\sqrt{1-\rho}$ is largest at the
smallest feasible $\rho$. Inside this family the regular simplex therefore wins
strictly at every signal-to-noise ratio. This proves
Theorem~\ref{thm:coding}(a) for equicorrelated codes in three lines.
Figs.~\ref{fig:pcsnr} and~\ref{fig:deficit} plot \eqref{eq:worked} and its
probabilistic counterpart. The general case is not reachable this way, since
a general Gram matrix admits no such one-dimensional reduction. Supplying
that is the task of the rest of the paper. Neither figure is part of a proof.
Both are computed by quadrature rather than simulation.

\begin{figure}[t]
\centering
\includegraphics[width=\linewidth]{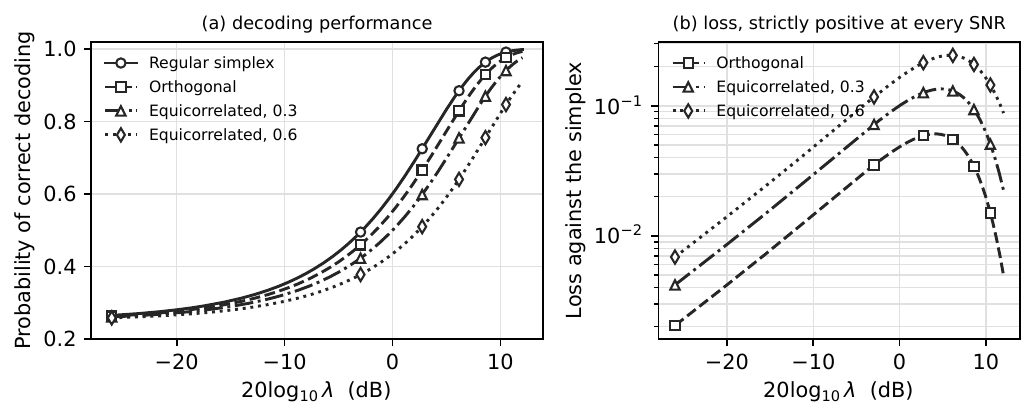}
\caption{Theorem~\ref{thm:coding}(a) for $n=3$, that is four signals. The
comparison family is equicorrelated, meaning that all pairwise inner
products equal a common value $\rho$, with the regular simplex at
$\rho=-1/3$ and mutually orthogonal signals at $\rho=0$. The ambient
dimension does not enter \eqref{eq:worked}; the family with $\rho\ge 0$
requires four dimensions, the simplex only three. Panel (a) shows the
probability of correct decoding, panel (b) the loss against the simplex on a
logarithmic scale. The loss stays strictly positive across the whole range,
which is what forbids a non-simplex code from tying the optimum at any
single operating point. Both panels are computed by Gauss--Hermite
quadrature from the closed form \eqref{eq:worked}. No simulation is involved.}
\label{fig:pcsnr}
\end{figure}

\begin{figure}[t]
\centering
\includegraphics[width=\linewidth]{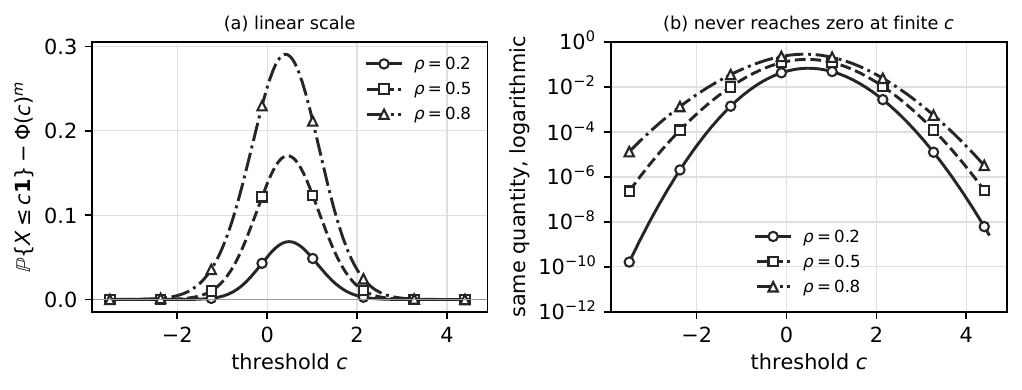}
\caption{Theorem~\ref{thm:threshold} for $m=4$ equicorrelated coordinates
with common correlation $\rho$. The plotted quantity is the amount by which
the correlated lower-orthant probability exceeds its independent
counterpart. Panel (b) repeats panel (a) on a logarithmic scale and shows
that the excess never vanishes at a finite threshold, decaying to zero only
in the two formal limits $c\to\pm\infty$ excluded by the theorem. Values are
computed by quadrature from
$\E\,\Phi((c-\sqrt{\rho}W)/\sqrt{1-\rho})^{m}$.}
\label{fig:deficit}
\end{figure}

\begin{remark}[Machine-checked companion]\label{rem:lean}
A Lean~4 companion, built on the formalization accompanying
\cite{Mulgund2026} whose axiom audit we reproduced independently, verifies
the single-parameter moment-generating rigidity along an independent route
with two steps. Stochastic domination combined with equality of one
strictly increasing moment forces equality of laws. A correlated Gaussian
vector whose maximum carries the independent distribution at all
thresholds is independent. Every public declaration depends only on the three standard
Lean axioms. The companion is at
\url{https://github.com/Tongji708A/simplex-uniqueness-lean}, commit
\texttt{6a4f340}, is pinned to
Lean 4.31.0 and mathlib v4.31.0 with the dependency fixed at an exact
revision, builds with a single \texttt{lake build}. Its axiom audit is
a source file whose expected output is recorded in the repository. The
single-threshold statement of Theorem~\ref{thm:threshold} rests
additionally on Royen's theorem. No formalization of that theorem exists to
date. Its grouping step reduces to the slab-against-convex
Khatri--\v{S}id\'ak case, a plausible target for future formalization.
\end{remark}

\section{Quoted Inputs}

The following results are quoted from \cite{Mulgund2026} and restated only
to fix notation, without repeating their proofs.

\begin{proposition}[{\cite[Thm.~2.1]{Mulgund2026}}]\label{prop:dom}
If $R$ satisfies the hypotheses of Theorem~\ref{thm:threshold} and
$X\sim N(0,R)$, then $\Prob\{X\le c\ones\}\ge\Phi(c)^{n+1}$ for every
$c\in\R$; equivalently $\max_iX_i\st\max_iZ_i$ for independent standard
Gaussian $Z_i$.
\end{proposition}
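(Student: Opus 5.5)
Since the paper quotes this result from \cite[Thm.~2.1]{Mulgund2026}, what follows is an outline of how I would reconstruct it. The steps are not checked; the precise tilt in the second step is the part I am least sure of. The plan is to split off the common direction $\ones$ and reduce the orthant comparison to a comparison of symmetric events, where Gaussian correlation inequalities apply.

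First write $R=J/(n+1)+Q$ with $Q\succeq 0$ by hypothesis. Then
\[
X \stackrel{d}{=} \frac{W}{\sqrt{n+1}}\ones+Y,\qquad W\sim N(0,1),\quad Y\sim N(0,Q),\quad W \text{ independent of } Y.
\]
The independent vector $Z$ admits the same decomposition with $Q$ replaced by $Q_0=I-J/(n+1)$, the covariance of the projection of $Z$ onto $\ones^\perp$. Conditioning on $W$, both sides become Gaussian mixtures over the level
\[
t=c-\frac{W}{\sqrt{n+1}}
\]
of an orthant probability $\Prob\{Y\le t\ones\}$. It therefore suffices to compare these mixtures, not the conditional probabilities pointwise, which need not be ordered.

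Second, I would remove the asymmetry of the orthant by an exponential tilt in the $\ones$ direction. I would write each one-dimensional factor $\ones\{x\le t\}$ against the Gaussian weight as a truncated exponential $e^{\theta x}\ones\{x\le t\}$, choosing $\theta$ adaptively in $c$ so that the tilted product over coordinates matches $\Phi(c)^{n+1}$ on the independent side. Taking the normalized self-convolution of such a truncated exponential, that is, forming the law of $U-U'$ for i.i.d.\ tilted copies, turns each factor into an even function. That function is a mixture of symmetric interval indicators $\ones\{|x|\le s\}$.

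Third, for symmetric rectangles the \v{S}id\'ak--Khatri inequality, or Royen's theorem in its full form, gives
\[
\Prob\Bigl\{\bigcap_i \{|Y_i|\le s_i\}\Bigr\}\ge\prod_i\Prob\{|Y_i|\le s_i\}.
\]
Integrating this against the mixing measures produced by the tilt and the self-convolution yields $\Prob\{X\le c\ones\}\ge\Phi(c)^{n+1}$.

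The equivalence with $\max_iX_i\st\max_iZ_i$ is immediate, because $\{\max_i X_i\le c\}=\{X\le c\ones\}$.

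The main obstacle is the second step: choosing the tilt so that the self-convolution really produces nonnegative mixtures of symmetric intervals, and so that the normalizations on the $X$ side and the $Z$ side line up exactly. This is precisely where Mulgund's variational argument, giving a unique aligning tilt, is needed. I would first try the tilt $\theta$ for which $\E[e^{\theta Z_1}\ones\{Z_1\le c\}]$ is stationary in $\theta$, and check alignment on the equicorrelated family of \eqref{eq:worked}.
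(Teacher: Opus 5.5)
There is a genuine gap, and it goes beyond the choice of a constant in your second step. The route visible in this paper is the proof of Theorem~\ref{thm:threshold} with Proposition~\ref{prop:product} in place of Lemma~\ref{lem:strict-product}: a tilt, then the change of measure of Proposition~\ref{prop:com}, then the product inequality, then the variational inequality in \eqref{eq:alignment}.

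Your tilt along $\ones$ already fails at the first stage. Tilting by $a$ gives $\Prob\{X\le c\ones\}=e^{-a^{\mathsf T}Ra/2}\,\E[e^{a^{\mathsf T}X}\ones\{X\le b\}]$ with shifted thresholds $b=c\ones+Ra$. The product inequality that your self-convolution is meant to supply needs every factor $e^{a_iz}\ones_{(-\infty,b_i]}(z)$ to have zero Gaussian first moment, i.e.\ $a_i=r(b_i-a_i)$ for each $i$. With $a=\theta\ones$, strict monotonicity of $r$ forces all $b_i$ to be equal, hence all $(R\ones)_i$ to be equal, which fails for general $R$. Your candidate $\theta$, stationary for $\E[e^{\theta Z_1}\ones\{Z_1\le c\}]$, ignores the shift from $c$ to $b$. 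It also does not exist for $c\le 0$, since stationarity reads $c=H(c-\theta)$ and $H>0$. The tilt that works is the coordinatewise one of Proposition~\ref{prop:tilt}, with $a_i=r(s_i)$ and $b_i=H(s_i)$, coupled to $R$ through $s+a-Ra=c\ones$.

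The decisive missing piece is where $R-J/(n+1)\succeq 0$ gets used. Your Step~1 is the transfer \eqref{eq:mgf-transfer} run backwards: $Q=\alpha G$ for a correlation matrix $G$ and $Q_0=\alpha G_\Delta$. Conditioning on $W$ therefore just returns you to the original simplex comparison. You then concede that the conditional orthant probabilities are not ordered, without saying how the mixtures are to be compared.

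Steps~2--3 hold for every correlation matrix, yet the conclusion is false without the hypothesis. For $n=1$ and correlation $-1$, $\Prob\{X\le c\ones\}=2\Phi(c)-1<\Phi(c)^2$ for $c>0$. Tilt, change of measure and product inequality deliver only $\Prob\{X\le c\ones\}\ge\exp\{\sum_i\log\Phi(s_i)+\tfrac12a^{\mathsf T}(I-R)a\}$. Bounding this below by $\Phi(c)^{n+1}$ is the variational inequality of Proposition~\ref{prop:tilt}. That is the step you defer to ``Mulgund's variational argument'', and it is the only place the hypothesis enters.

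Finally, the self-convolution is not the law of $U-U'$ for a single factor. It is the rotation $(Y\pm Y')/\sqrt2$ of two copies of the whole vector. For fixed $U$ it makes each factor one symmetric interval indicator in $V_i$, as in \eqref{eq:interval-indicator}. \v{S}id\'ak--Khatri then gives only $Q_0^2\ge Q_1$, in the notation of the proof of Lemma~\ref{lem:strict-product}. Closing the argument still requires the product inequality for the log-concave mean-zero factors $Dh_i$, or an iteration of $D$ driven by the zero first moment. A single integration against mixing measures does not suffice.
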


\begin{proposition}[{\cite[Thm.~4.1]{Mulgund2026}}]\label{prop:product}
Let $S$ be any $(n+1)\times(n+1)$ correlation matrix and $Y\sim N(0,S)$. If
$f_i\colon\R\to[0,\infty)$ are bounded Borel log-concave functions with
$0<\int f_i\,d\gamma_1<\infty$ and $\int zf_i(z)\,d\gamma_1(z)=0$, then
\begin{equation}\label{eq:nonstrict-product}
\E\prod_{i=1}^{n+1}f_i(Y_i)\ge\prod_{i=1}^{n+1}\int f_i\,d\gamma_1 .
\end{equation}
Singular $S$ is allowed.
\end{proposition}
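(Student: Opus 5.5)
Since Proposition~\ref{prop:product} is quoted from \cite[Thm.~4.1]{Mulgund2026}, we would simply invoke it. If we had to reprove it, our plan would reduce it to the symmetric case, where Royen's theorem applies directly. We would then remove the asymmetry of the factors by normalized self-convolution, with the centering hypothesis $\int z f_i\,d\gamma_1=0$ used to control the limit.

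\emph{Symmetric case.} Suppose first that every $f_i$ is even. Each $f_i$ is bounded and log-concave, so its superlevel sets $\{f_i>s\}$ are symmetric intervals. The layer-cake formula gives
\[
\E\prod_i f_i(Y_i)=\int_{[0,\infty)^{n+1}}\Prob\Bigl\{\,|Y_i|<r_i(s_i)\ \forall i\Bigr\}\,ds ,
\]
and the product $\prod_i\int f_i\,d\gamma_1$ has the same representation with $\prod_i\Prob\{|Y_i|<r_i(s_i)\}$ in the integrand. The symmetric-rectangle inequality of \v{S}id\'ak--Khatri, obtained from Royen's theorem by grouping one coordinate at a time, compares the integrands pointwise. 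This proves \eqref{eq:nonstrict-product} for even factors and arbitrary, possibly singular, $S$.

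\emph{General case.} Take an independent copy $Y'$ of $Y$ and set $U=(Y+Y')/\sqrt2$ and $V=(Y-Y')/\sqrt2$. These are independent, each distributed as $N(0,S)$. Writing $D(f)=\E\prod f_i(Y_i)$, we get
\[
D(f)^2=\E\prod_i F_i(U_i,V_i),\qquad F_i(u,v)=f_i\Bigl(\tfrac{u+v}{\sqrt2}\Bigr)f_i\Bigl(\tfrac{u-v}{\sqrt2}\Bigr).
\]
Each $F_i$ is log-concave on $\R^2$ and even in $v$. Conditioning on $U$, the map $v\mapsto F_i(U_i,v)$ is an even log-concave factor, so the symmetric case applied to $V$ bounds the conditional expectation below by $\prod_i\tilde f_i(U_i)$. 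Here $\tilde f_i(u)=\int F_i(u,v)\,d\gamma_1(v)$ is log-concave by Pr\'ekopa and satisfies $\int\tilde f_i\,d\gamma_1=(\int f_i\,d\gamma_1)^2$. This gives $D(f)^2\ge D(\tilde f)$, while the right-hand side of \eqref{eq:nonstrict-product} is exactly squared. Iterating, it suffices to show that the normalized ratio $D(f^{(k)})^{2^{-k}}/\prod_i\|f_i^{(k)}\|_{L^1(\gamma)}^{2^{-k}}$ tends to a value at least $1$.

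\emph{Main obstacle.} The hard step is the asymptotics of $f^{(k)}$. We would consider the probability measure $\mu_i\propto f_i\,d\gamma_1$, which is more log-concave than $\gamma_1$. Under the doubling map, $\tilde f_i/\int\tilde f_i$ is the density relative to $\gamma_1$ of the law of $(X+X')/\sqrt2$ with $X,X'\sim\mu_i$ independent. By a CLT for strongly log-concave measures, $f_i^{(k)}$ becomes Gaussian-like: $\mu_i^{(k)}$ converges to a centered Gaussian of variance at most $1$. It is centered precisely because $\mu_i$ has barycenter $0$, and its variance is at most $1$ by Brascamp--Lieb. For the resulting Gaussian factors, which are even, the symmetric case gives the inequality in the limit.

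To make the limit rigorous, we would try first to work not with $D$ itself but with a quantity that is monotone under the doubling step. One option is the logarithmic deficit $\log D(f)-\sum_i\log\int f_i\,d\gamma_1$, which we would aim to show is nonincreasing along the iteration. We would then establish continuity of that quantity under the uniform log-concave bounds, with an approximation argument to handle singular $S$.

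If this CLT route proves too delicate, a fallback is an alternative adaptive-tilting argument, in the spirit of \cite{Mulgund2026}, that exponentially tilts each $f_i$ so that every factor becomes symmetric after a single convolution step.
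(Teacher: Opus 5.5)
Invoking \cite[Thm.~4.1]{Mulgund2026} is exactly what the paper does; the paper gives no proof of its own. Your reproof sketch follows the normalized self-convolution scheme the paper attributes to Mulgund, whose first step Lemma~\ref{lem:strict-product} strengthens. The following parts of it are correct: the symmetric case, the doubling inequality for your $D(f)=\E\prod_i f_i(Y_i)$, namely $D(f)^2\ge D(\tilde f)$ with $\int\tilde f_i\,d\gamma_1=(\int f_i\,d\gamma_1)^2$, and the use of centering to get a centered Gaussian limit of variance at most $1$.

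The step you propose for closing the limit would fail, however. The log deficit $\delta(f)=\log D(f)-\sum_i\log\int f_i\,d\gamma_1$ is not nonincreasing along the iteration. Doubling gives only $\delta(f^{(k+1)})\le 2\delta(f^{(k)})$, and $\delta$ can go up. Take $S=J$, $n=2$ and $f_i=\ones_{[-r,r]}$ with $r$ small. Then $\delta(f)=-2\log\gamma_1([-r,r])\approx\log(\pi/(2r^2))$. The limiting factor $g=dN(0,\sigma^2)/d\gamma_1$, with $\sigma^2\approx r^2/3$, has $\log D(g)=\log\int g^3\,d\gamma_1\approx\log(\sqrt3/r^2)$. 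Fatou then gives $\liminf_k\delta(f^{(k)})\ge\log D(g)>\delta(f)$.

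The quantity that is monotone is $2^{-k}\delta(f^{(k)})$, i.e.\ $D(h)\ge D(h^{(k)})^{2^{-k}}$ for the normalized $h_i=f_i/\int f_i\,d\gamma_1$. With it you need neither monotonicity of $\delta$ nor continuity of $D$, because $\liminf_kD(h^{(k)})>0$ already forces $D(h)\ge 1$. That lower bound comes from Fatou, $\liminf_kD(h^{(k)})\ge D(g)\ge 1$, once $h_i^{(k)}\to g_i$ pointwise. Pointwise convergence follows from the ordinary CLT, since weak convergence of log-concave densities to a nondegenerate limit is locally uniform. No approximation of singular $S$ is needed either, because the \v{S}id\'ak step holds for singular $S$ directly.

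Finally, your fallback is circular as stated. One self-convolution makes each factor even only in $v$ given $U$, and the resulting outer factors are not even. The paper's Lemma~\ref{lem:strict-product} needs the present proposition precisely for those factors.
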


Define the lower-tail inverse Mills ratio and the truncated-normal map
\[
r(s)=\frac{\phi(s)}{\Phi(s)},\qquad H(s)=s+r(s).
\]
$H$ is a strictly increasing bijection from $\R$ onto $(0,\infty)$. For
$a>0$, $b\in\R$, $s=b-a$,
\begin{equation}\label{eq:tilt-integrals}
\int e^{az}\ones_{(-\infty,b]}(z)\,d\gamma_1(z)=e^{a^2/2}\Phi(s),\qquad
\int ze^{az}\ones_{(-\infty,b]}(z)\,d\gamma_1(z)=e^{a^2/2}\bigl(a\Phi(s)-\phi(s)\bigr),
\end{equation}
and the second integral vanishes precisely when $a=r(s)$, in which case
$b=H(s)$ \cite[Lem.~3.1]{Mulgund2026}.

\begin{proposition}[{\cite[Prop.~5.2]{Mulgund2026}}]\label{prop:tilt}
Under the hypotheses of Theorem~\ref{thm:threshold} and for each finite $c$
there exist finite $s_1,\dots,s_{n+1}$ with $a_i=r(s_i)>0$ such that
\begin{equation}\label{eq:alignment}
s+a-Ra=c\ones,\qquad
\sum_i\log\Phi(s_i)+\tfrac12 a^{\mathsf T}(I-R)a\ge(n+1)\log\Phi(c).
\end{equation}
\end{proposition}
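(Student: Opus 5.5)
The plan is to read the alignment system \eqref{eq:alignment} as the stationarity condition of a strictly concave function of the tilt vector $a$. The inequality would then follow by comparing the optimal value with the value at the diagonal tilt $a=r(c)\ones$. Since $r$ is a strictly decreasing bijection of $\R$ onto $(0,\infty)$, I parametrize by $a\in(0,\infty)^{n+1}$ and write $s_i=r^{-1}(a_i)$. The condition $a_i=r(s_i)$ is then automatic, and the equation $s+a-Ra=c\ones$ reads $r^{-1}(a)+(I-R)a=c\ones$ coordinatewise.

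\emph{Potential.} Let $g(a)=a\,r^{-1}(a)-\log\Phi(r^{-1}(a))$. Since $\frac{d}{da}\log\Phi(r^{-1}(a))=a\,(r^{-1})'(a)$, we get $g'(a)=r^{-1}(a)$. Consider
\[
Q(a)=\sum_i g(a_i)+\tfrac12 a^{\mathsf T}(I-R)a-c\,\ones^{\mathsf T}a .
\]
Its gradient vanishes exactly on the solutions of \eqref{eq:alignment}. Its Hessian is $\mathrm{diag}\bigl(1/r'(s_i)\bigr)+(I-R)$. I would use the classical bound $-1<r'<0$, which gives $1/r'(s_i)<-1$. Together with $I-R\preceq I-J/(n+1)\preceq I$, which is where the hypothesis $R-J/(n+1)\succeq 0$ enters, this shows the Hessian is negative definite. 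So $Q$ is strictly concave on the open orthant.

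\emph{Existence.} I would show that $Q$ attains its maximum at an interior point, by checking that $Q\to-\infty$ at the boundary of $(0,\infty)^{n+1}$. The Hessian bound gives uniform concavity in directions where $a\to\infty$. As some $a_i\downarrow 0$ we have $s_i\to+\infty$, so $\partial_iQ=s_i+((I-R)a)_i-c\to+\infty$, which pushes the maximizer away from that face. This properness check, making the boundary behaviour uniform when several coordinates degenerate at once, is the step I expect to be the main technical obstacle. The unique maximizer $a^*$ gives finite $s_i$ satisfying the first relation of \eqref{eq:alignment}.

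\emph{The value inequality.} At a critical point, substitute $s=c\ones-(I-R)a$ into $a^{\mathsf T}s$. This gives
\[
Q(a^*)=-\sum_i\log\Phi(s_i)-\tfrac12 a^{*\mathsf T}(I-R)a^*,
\]
that is, minus the left-hand side of \eqref{eq:alignment}. Hence the claimed inequality is equivalent to $Q(a^*)\le-(n+1)\log\Phi(c)$. I would not try to compare with a single test point, since maximality of $a^*$ gives only lower bounds on $Q(a^*)$. Instead I would bound $Q(a^*)$ from above using concavity at the diagonal point $\bar a=r(c)\ones$, whose $s$-coordinates all equal $c$:
\[
Q(a^*)\le Q(\bar a)+\nabla Q(\bar a)^{\mathsf T}(a^*-\bar a).
\]
Here $\nabla Q(\bar a)=(I-R)\bar a$, and the value is $Q(\bar a)=-(n+1)\log\Phi(c)+\tfrac12 r(c)^2\,\ones^{\mathsf T}(I-R)\ones$. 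The hypothesis gives $\ones^{\mathsf T}R\ones\ge n+1$, so $\ones^{\mathsf T}(I-R)\ones\le 0$ and $Q(\bar a)\le-(n+1)\log\Phi(c)$.

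The remaining linear term $r(c)\,\ones^{\mathsf T}(I-R)(a^*-\bar a)$ has no evident sign, and controlling it is the crux of the argument. I would combine the tangent bound at $\bar a$ with the tangent bound at $a^*$, where $\nabla Q(a^*)=0$, and exploit the strong concavity margin $1/r'<-1$ to absorb the cross term. If that fails, I would fall back on the probabilistic tilting representation of $\Prob\{X\le c\ones\}$ from which \eqref{eq:alignment} originates in \cite{Mulgund2026}, and read the inequality off from there.
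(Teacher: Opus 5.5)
The paper quotes this proposition from \cite{Mulgund2026} without proof, so your argument has to stand on its own. It stops exactly where the content of the statement lies.

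Your potential $Q$ is the right object. Its critical points are precisely the aligned tilts, and at a critical point $-Q(a^*)$ equals the left side of \eqref{eq:alignment}. But you never prove the inequality $Q(a^*)\le-(n+1)\log\Phi(c)$. The first-order bound at $\bar a=r(c)\ones$ leaves the indefinite term $r(c)\ones^{\mathsf T}(I-R)(a^*-\bar a)$. Your Hessian estimate cannot absorb it, because routing it through $I-R\preceq I$ throws away the $-R$ you need.

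Your properness sketch is also loose. $Q$ does not tend to $-\infty$ on the faces $a_i=0$ at all, since $g(0^+)=0$; only the blow-up of $\partial_iQ$ keeps the maximizer off them. Uniform concavity at infinity also needs the $J/(n+1)$ part of the hypothesis, because the diagonal margin $1/r'(s_i)+1$ tends to $0$ as $a_i\to\infty$ (recall $r'(s)\to-1$ as $s\to-\infty$).

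The missing step is to keep the full curvature bound, $\nabla^2Q=\mathrm{diag}\bigl(1/r'(s_i)+1\bigr)-R\prec-R$ on the whole orthant. Taylor's formula with $d=a-\bar a$ then gives, for every $a\in(0,\infty)^{n+1}$,
\[
Q(a)\le Q(\bar a)+r(c)\ones^{\mathsf T}(I-R)d-\tfrac12d^{\mathsf T}Rd
=-(n+1)\log\Phi(c)+\tfrac{n+1}{2}r(c)^2+r(c)\ones^{\mathsf T}d-\tfrac12a^{\mathsf T}Ra .
\]
Now apply the hypothesis to $a$ itself rather than to $\ones$, namely $a^{\mathsf T}Ra\ge(\ones^{\mathsf T}a)^2/(n+1)$. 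Completing the square yields
\[
Q(a)\le-(n+1)\log\Phi(c)-\frac{\bigl(\ones^{\mathsf T}a-(n+1)r(c)\bigr)^2}{2(n+1)} .
\]
This single estimate does both jobs. Since $\ones^{\mathsf T}a=\|a\|_1$ on the orthant, it makes $Q$ coercive. Together with continuity of $Q$ up to the faces and the gradient blow-up there, this gives an interior critical point. Evaluated at that point, the estimate is the inequality in \eqref{eq:alignment}.

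The strict bound $\nabla^2Q\prec-R$ also shows that equality forces $a^*=\bar a$, hence $R\ones=\ones$. This is consistent with Remark~\ref{rem:variational}. So your intended combination of tangent bound and strong concavity does work, but only with this specific curvature bound and this use of the hypothesis, neither of which your proposal contains.
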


\begin{proposition}[{\cite[Lem.~6.1]{Mulgund2026}}]\label{prop:com}
If $X\sim N(0,R)$ with $R\succeq 0$, then for every $a\in\R^{n+1}$ and
nonnegative Borel $F$,
$\E[e^{a^{\mathsf T}X-a^{\mathsf T}Ra/2}F(X)]=\E F(X+Ra)$. The identity does
not involve $R^{-1}$ and holds for singular $R$.
\end{proposition}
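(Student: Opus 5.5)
The plan is to avoid the density of $X$, which need not exist when $R$ is singular, and to work instead with a factorization through a standard Gaussian vector. Choose any matrix $A$ with $R=AA^{\mathsf T}$, for instance the symmetric square root $A=R^{1/2}$, which exists because $R\succeq 0$. Take $g\sim N(0,I_{n+1})$. Then $X$ has the same law as $Ag$, and both sides of the claimed identity depend only on the law of $X$, so we may assume $X=Ag$. Put $b=A^{\mathsf T}a$. The two quantities in the exponent become
\[
a^{\mathsf T}X=b^{\mathsf T}g,\qquad a^{\mathsf T}Ra=\|b\|^2 ,
\]
so the left side equals $\E\bigl[e^{b^{\mathsf T}g-\|b\|^2/2}F(Ag)\bigr]$. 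No inverse of $R$ appears anywhere.

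The core step is the elementary completion of the square for the standard Gaussian density $\varphi_{n+1}$ on $\R^{n+1}$:
\[
\varphi_{n+1}(y)\,e^{b^{\mathsf T}y-\|b\|^2/2}=\varphi_{n+1}(y-b).
\]
Writing the expectation as a Lebesgue integral and substituting $h=y-b$ gives
\[
\E\bigl[e^{b^{\mathsf T}g-\|b\|^2/2}F(Ag)\bigr]=\E\,F\bigl(A(g+b)\bigr)=\E\,F\bigl(Ag+AA^{\mathsf T}a\bigr)=\E\,F(X+Ra).
\]
The identity holds in $[0,\infty]$. Since $F\ge 0$ is Borel, every integrand is a nonnegative measurable function and the change of variables is valid by Tonelli, with no integrability assumption needed.

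The only point needing care is the singular case, and it is exactly what the factorization handles. Because $A$ is merely a square root of $R$, the vector $X=Ag$ may be supported on a proper subspace, yet all integrals are taken against the nondegenerate law of $g$. The shift $Ra=Ab$ lies in the range of $A$, which is the support of $X$, so the identity is consistent. I do not expect a genuine obstacle.

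As a sanity check, taking $F(x)=e^{t^{\mathsf T}x}$ reduces both sides to $e^{t^{\mathsf T}Ra+t^{\mathsf T}Rt/2}$. A monotone class argument starting from such $F$ would give an alternative route, but the direct substitution above is shorter.
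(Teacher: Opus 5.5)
Your proof is correct and complete. With $R=AA^{\mathsf T}$, $X=Ag$ and $b=A^{\mathsf T}a$, completing the square and translating the standard Gaussian integral gives $\E F(A(g+b))=\E F(X+Ra)$ for every nonnegative Borel $F$, and $R^{-1}$ never appears.

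The paper gives no proof of its own here; it quotes the identity from \cite[Lem.~6.1]{Mulgund2026}. Your factorization through a nondegenerate standard Gaussian is the same device the paper uses for singular $S$ in the proof of Lemma~\ref{lem:rectangle}. One cosmetic point: the substitution $h=y-b$ rests on translation invariance of Lebesgue measure for nonnegative integrands, not on Tonelli.
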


The one external input beyond \cite{Mulgund2026} is Royen's Gaussian
correlation theorem \cite{Royen2014}. For every centered Gaussian measure
$\gamma$ on $\R^k$ and all symmetric convex sets $A,B$,
$\gamma(A\cap B)\ge\gamma(A)\gamma(B)$. The coordinatewise rectangle case is
the earlier \v{S}id\'ak--Khatri inequality \cite{Sidak1967,Khatri1967}. The
grouping below pairs a two-dimensional block with symmetric slabs, a
slab-against-convex case already contained in the Khatri--\v{S}id\'ak lemma
and covered a fortiori by Royen's theorem, quoted here as a single
self-contained reference.

\section{Two Strict Inequalities}

Theorem~\ref{thm:threshold} rests on two inequalities proved here.
The first sharpens a classical Gaussian rectangle bound. The second applies
it to the truncated exponential factors that Mulgund's adaptive tilting
produces.

\subsection{A strict symmetric rectangle inequality}

The first of the two inequalities behind Theorem~\ref{thm:threshold} is a
strict form of the \v{S}id\'ak--Khatri rectangle inequality, stated here in
the generality in which it is used and in which it may serve elsewhere.

\begin{lemma}\label{lem:rectangle}
Let $V=(V_1,\dots,V_{n+1})$ be a centered Gaussian vector with $N(0,1)$
marginals and correlation matrix $S\ne I$, possibly singular. Then for all
finite $r_1,\dots,r_{n+1}\in(0,\infty)$,
\begin{equation}\label{eq:strict-rectangle}
\Prob\{|V_i|\le r_i,\ 1\le i\le n+1\}>\prod_{i=1}^{n+1}\Prob\{|Z|\le r_i\}.
\end{equation}
\end{lemma}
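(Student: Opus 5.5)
Since $S\ne I$, there is a pair $j\ne k$ with $\rho:=S_{jk}\ne 0$; relabel so that $j=1$, $k=2$. The plan is to isolate a strict two-dimensional gap for the block $(V_1,V_2)$ and carry it through the remaining coordinates with Royen's theorem. Put $A=\{x:|x_1|\le r_1,|x_2|\le r_2\}$ and $B=\{x:|x_i|\le r_i,\ 3\le i\le n+1\}$. Both are symmetric convex sets in $\R^{n+1}$, and the law of $V$ is a centered (possibly degenerate) Gaussian measure. Royen's theorem then gives
\[
\Prob\{V\in A\cap B\}\ge\Prob\{V\in A\}\,\Prob\{V\in B\}.
\]
The classical \v{S}id\'ak--Khatri rectangle inequality, which is the case of Royen's theorem with one slab peeled off at a time, gives $\Prob\{V\in B\}\ge\prod_{i\ge3}\Prob\{|Z|\le r_i\}$, and this product is strictly positive since every $r_i>0$. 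Hence \eqref{eq:strict-rectangle} follows once the two-dimensional strict inequality
\[
\Prob\{|V_1|\le r_1,|V_2|\le r_2\}>\Prob\{|Z|\le r_1\}\Prob\{|Z|\le r_2\}
\]
is established. Strictness then propagates because we multiply a strict inequality by a positive lower bound. When $n=1$ the set $B$ is empty as a constraint and the claim is exactly this two-dimensional one.

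For the bivariate step I would set $g(\rho)=\Prob\{|V_1|\le r_1,|V_2|\le r_2\}$ for a standard bivariate normal with correlation $\rho\in[-1,1]$. The endpoints $\rho=\pm1$ are handled directly: there $V_2=\pm V_1$, so $g=\Prob\{|Z|\le\min(r_1,r_2)\}$. This strictly exceeds the product, because $\Prob\{|Z|\le r\}<1$ for finite $r$ and so the smaller of the two factors is strictly larger than the product. For $|\rho|<1$, symmetry under $V_2\mapsto -V_2$ gives $g(\rho)=g(-\rho)$, so it suffices to treat $\rho\in(0,1)$. Plackett's identity $\partial_\rho\Prob\{V\in Q\}=\E[\partial_1\partial_2\varphi_\rho]$, integrated over the rectangle, yields
\[
g'(\rho)=\sum_{\epsilon_1,\epsilon_2\in\{\pm1\}}\epsilon_1\epsilon_2\,\varphi_\rho(\epsilon_1r_1,\epsilon_2r_2)=2\bigl(\varphi_\rho(r_1,r_2)-\varphi_\rho(r_1,-r_2)\bigr).
\]
This is strictly positive for $\rho>0$, since the bivariate density's exponent contains $+\rho x_1x_2/(1-\rho^2)$. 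So $g$ is strictly increasing on $[0,1)$, and $g(\rho)>g(0)=\Prob\{|Z|\le r_1\}\Prob\{|Z|\le r_2\}$ for every $\rho\in(0,1)$.

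The step I expect to need most care is this bivariate computation: justifying the differentiation under the integral, and making sure the strict gap is an honest positive number at the fixed $\rho$ rather than something obtained as a limit. The derivative argument avoids the limit issue, because $g$ is continuous on $[0,\rho]$ with a positive derivative on $(0,\rho]$, so the mean value theorem gives a positive increment. Singularity of the full matrix $S$ causes no trouble. Royen's theorem holds for degenerate centered Gaussian measures, and the $2\times2$ block is either nonsingular ($|\rho|<1$) or falls under the endpoint case already treated.
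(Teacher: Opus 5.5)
Your proof is correct. Its outer structure matches the paper's: pick $S_{jk}\ne0$, prove a strict bivariate gap, carry it through the remaining coordinates with Royen's theorem, and treat $|\rho|=1$ by the $\min(r_1,r_2)$ argument. Splitting off all remaining slabs at once and then bounding $\Prob\{V\in B\}$ by \v{S}id\'ak is equivalent to the paper's slab-by-slab recursion.

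The genuine difference is the bivariate step.

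The paper works at the fixed correlation $\rho_0=|\rho|\in(0,1)$. It conditions on $W=|Y_1|$ and shows that the conditional slab probability $q(w)=\Prob\{|\rho_0w+\sigma Z|\le s\}$ is strictly decreasing in $w$. The independent-copy identity then gives a strictly positive covariance between $q(W)$ and $\ones_{\{W\le r\}}$. This uses nothing beyond one-variable monotonicity.

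You instead move along the correlation parameter. Plackett's identity gives $g'(t)=2\bigl(\varphi_t(r_1,r_2)-\varphi_t(r_1,-r_2)\bigr)>0$ for $t\in(0,1)$, and the mean value theorem converts this into a positive increment.

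Your route costs a routine justification of differentiating under the integral over a bounded rectangle, for $t$ in compact subsets of $(-1,1)$. In return it gives more:
\begin{itemize}
\item strict monotonicity of the rectangle probability in $|\rho|$ on $[0,1)$;
\item an explicit formula for the two-dimensional gap, $g(\rho)-g(0)=2\int_0^{|\rho|}\bigl(\varphi_t(r_1,r_2)-\varphi_t(r_1,-r_2)\bigr)\,dt$.
\end{itemize}
This is the kind of quantitative information the paper's conclusion asks for in a stability estimate.

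One notational slip: Plackett's identity should read $\partial_\rho\Prob\{V\in Q\}=\int_Q\partial_1\partial_2\varphi_\rho\,dx$, not an expectation of $\partial_1\partial_2\varphi_\rho$. The corner formula you then write down is the correct evaluation of that integral.
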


\begin{proof}
Since $S\ne I$ with unit diagonal there are $i\ne j$ with
$\rho=S_{ij}\ne 0$. We first prove two-dimensional strictness. Let
$(Y_1,Y_2)$ be standard bivariate Gaussian with correlation $\rho$ and fix
$r,s>0$. Because the events involve absolute values only, replacing $Y_2$ by
$-Y_2$ if necessary reduces to $\rho_0=|\rho|$. Suppose first
$0<\rho_0<1$ and write $Y_2=\rho_0Y_1+\sigma Z$ with
$\sigma=\sqrt{1-\rho_0^2}$ and $Z$ independent of $Y_1$. For $w\ge 0$ put
\[
q(w)=\Prob\{|\rho_0w+\sigma Z|\le s\}
=\Phi\Bigl(\frac{s-\rho_0w}{\sigma}\Bigr)+\Phi\Bigl(\frac{s+\rho_0w}{\sigma}\Bigr)-1 .
\]
Symmetry of $[-s,s]$ makes the conditional probability given $Y_1=w$ and
given $Y_1=-w$ equal. With $W=|Y_1|$,
\[
\Prob\{|Y_1|\le r,\,|Y_2|\le s\}=\E\bigl[\ones_{\{W\le r\}}q(W)\bigr],\qquad
\E q(W)=\Prob\{|Y_2|\le s\}.
\]
For $w>0$,
\[
q'(w)=\frac{\rho_0}{\sigma}\Bigl[\phi\Bigl(\frac{s+\rho_0w}{\sigma}\Bigr)
-\phi\Bigl(\frac{s-\rho_0w}{\sigma}\Bigr)\Bigr]<0,
\]
because $(s+\rho_0w)^2-(s-\rho_0w)^2=4s\rho_0w>0$ and $\phi$ is strictly
decreasing in the square of its argument. Thus $q$ is strictly decreasing on
$(0,\infty)$. With $W'$ an independent copy of $W$,
\[
2\,\mathrm{Cov}\bigl(q(W),\ones_{\{W\le r\}}\bigr)
=\E\bigl[(q(W)-q(W'))(\ones_{\{W\le r\}}-\ones_{\{W'\le r\}})\bigr],
\]
whose integrand is everywhere nonnegative and strictly positive on the
positive-probability event $\{W<r<W'\}$. Hence
\begin{equation}\label{eq:2d-strict}
\Prob\{|Y_1|\le r,\,|Y_2|\le s\}>\Prob\{|Y_1|\le r\}\Prob\{|Y_2|\le s\}.
\end{equation}
If $|\rho|=1$ then $Y_2=\pm Y_1$ almost surely and, with
$p(t)=\Prob\{|Z|\le t\}$, the left side equals $p(\min(r,s))$. This minimum
exceeds $p(r)p(s)$ because both factors lie in $(0,1)$. Hence \eqref{eq:2d-strict}
holds for all $\rho\ne 0$.

In $n+1$ dimensions set $A=\{|V_i|\le r_i,\,|V_j|\le r_j\}$ and
$B_k=\{|V_k|\le r_k\}$ for $k\ne i,j$. These sets and their finite
intersections are symmetric and convex. Recursive application of Royen's
theorem gives
\[
\Prob\Bigl(A\cap\bigcap_{k\ne i,j}B_k\Bigr)\ge
\Prob(A)\prod_{k\ne i,j}\Prob(B_k),
\]
and all $\Prob(B_k)$ are strictly positive. Combining this with
\eqref{eq:2d-strict} proves \eqref{eq:strict-rectangle} for
nonsingular laws. If $S$ is singular, write $V=TY$ with
$Y\sim N(0,I_{\mathrm{rank}\,S})$ and $TT^{\mathsf T}=S$. Replace each
set by its preimage under $T$. Preimages of symmetric convex sets under a
linear map are symmetric convex. Royen's theorem then applies directly in the
nondegenerate space of $Y$. Bivariate marginals with $|\rho|<1$ and
$|\rho|=1$ are both covered above. The singular case therefore introduces no gap.
A positive-definite approximation is deliberately avoided, since limits of
strict inequalities need not stay strict.
\end{proof}

\subsection{A strict product inequality for truncated exponentials}

\begin{lemma}\label{lem:strict-product}
Let $S\ne I$ be any $(n+1)\times(n+1)$ correlation matrix and
$Y\sim N(0,S)$. For each $i$ let $a_i>0$, $b_i>0$, set
$f_i(z)=e^{a_iz}\ones_{(-\infty,b_i]}(z)$, and assume
$\int zf_i(z)\,d\gamma_1(z)=0$. Then
\begin{equation}\label{eq:strict-product}
\E\prod_{i=1}^{n+1}f_i(Y_i)>\prod_{i=1}^{n+1}\int f_i\,d\gamma_1 .
\end{equation}
Singular $S$ is allowed.
\end{lemma}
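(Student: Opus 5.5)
The plan is to follow the self-convolution strategy. We square the left side of \eqref{eq:strict-product} using two independent copies of $Y$, rotate them, and reduce the strictness to Lemma~\ref{lem:rectangle}. The non-strict remainder is then absorbed by Proposition~\ref{prop:product}.

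\emph{Self-convolution.} Let $Y'$ be an independent copy of $Y$, and set $U=(Y+Y')/\sqrt2$ and $V=(Y-Y')/\sqrt2$. Since the covariance is the same matrix $S$ for both copies, $U$ and $V$ are independent and each is $N(0,S)$. This holds for singular $S$ as well, since only covariances are used. By independence the left side of \eqref{eq:strict-product} squared equals $\E\prod_i f_i(Y_i)f_i(Y'_i)$. For each coordinate we have $Y_i+Y'_i=\sqrt2U_i$, and the pair of conditions $Y_i\le b_i$, $Y'_i\le b_i$ is equivalent to $|V_i|\le \sqrt2 b_i-U_i$. Hence
\[
\Bigl(\E\prod_i f_i(Y_i)\Bigr)^2=\E_U\Bigl[\prod_i e^{\sqrt2 a_iU_i}\,\Prob\bigl\{|V_i|\le \sqrt2b_i-U_i\ \forall i\,\big|\,U\bigr\}\Bigr].
\]
Each truncated exponential has thus become a symmetric interval indicator in $V$.

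\emph{Strict step.} Let $E=\{U_i<\sqrt2 b_i\ \forall i\}$. On $E$ all radii are finite and positive, and $V\sim N(0,S)$ with $S\ne I$ is independent of $U$. So Lemma~\ref{lem:rectangle} gives, pointwise on $E$, a conditional probability strictly larger than $\prod_i g_i(U_i)$, where $g_i(u)=\Prob\{|Z|\le \sqrt2b_i-u\}$ and we set $g_i(u)=0$ for $u\ge\sqrt2 b_i$. Off $E$ both sides vanish. The event $E$ has positive probability, because $b_i>0$ and so $E$ contains a neighbourhood of the origin, which carries positive mass under any centered Gaussian law. The exponential weights are positive, and everything is integrable because $U_i$ is bounded above on the support. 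Integrating the pointwise strict inequality therefore gives
\[
\Bigl(\E\prod_i f_i(Y_i)\Bigr)^2>\E\prod_i h_i(U_i),\qquad h_i(u)=e^{\sqrt2a_iu}g_i(u).
\]

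\emph{Non-strict step.} We apply Proposition~\ref{prop:product} to the functions $h_i$ with the correlation matrix $S$. Each $h_i$ is Borel and bounded, since $h_i(u)\le e^{2a_ib_i}$. It is also log-concave: $g_i$ is the marginal of the indicator of the convex set $\{(u,z):|z|\le\sqrt2b_i-u\}$ against the Gaussian density, so Prékopa's theorem applies, and the exponential factor is log-linear. For the moment conditions, take $Y_1,Y_1'$ independent standard normal, reversing the rotation in one dimension. This gives $\int h_i\,d\gamma_1=(\int f_i\,d\gamma_1)^2$, which lies in $(0,\infty)$. It also gives
\[
\int u\,h_i\,d\gamma_1=\E\Bigl[\tfrac{Y_1+Y_1'}{\sqrt2}\,f_i(Y_1)f_i(Y_1')\Bigr]=\sqrt2\int f_i\,d\gamma_1\int zf_i\,d\gamma_1=0.
\]
Proposition~\ref{prop:product} then yields
\[
\E\prod_i h_i(U_i)\ge\prod_i\Bigl(\int f_i\,d\gamma_1\Bigr)^2 .
\]
Chaining this with the strict step, and taking square roots of the positive quantities, gives \eqref{eq:strict-product}.

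The main obstacle is the strict step. We must check that the conditional rectangle probability is taken under a law with correlation matrix exactly $S\ne I$, possibly singular. This is why the rotation must produce an independent $V$ with the same covariance as $Y$. We must also check that strictness holds on a set of positive $U$-measure rather than merely at isolated points. Verifying log-concavity and centering of the $h_i$ is routine by comparison.
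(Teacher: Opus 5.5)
Your proposal is correct and follows the paper's proof essentially step for step. It uses the same rotation $U=(Y+Y')/\sqrt2$, $V=(Y-Y')/\sqrt2$ to turn each truncated exponential into a symmetric interval indicator in $V$, applies Lemma~\ref{lem:rectangle} on the positive-probability set $E$, and then applies Proposition~\ref{prop:product} to the self-convolved factors, which are log-concave by Pr\'ekopa and centered. The only difference is cosmetic: the paper normalizes by $\mu_i=\int f_i\,d\gamma_1$ and works with $D(f_i/\mu_i)$, whereas your $h_i$ is the unnormalized $\mu_i^2\,D(f_i/\mu_i)$.
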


\begin{proof}
Each $f_i$ is bounded by $e^{a_ib_i}$, Borel, log-concave, strictly positive
on $(-\infty,b_i]$. The mass $\mu_i=\int f_i\,d\gamma_1$ lies in $(0,\infty)$. Normalize
$h_i=f_i/\mu_i$. With the normalized self-convolution
\[
(Dh)(u)=\int_\R h\Bigl(\frac{u+v}{\sqrt2}\Bigr)h\Bigl(\frac{u-v}{\sqrt2}\Bigr)\,d\gamma_1(v),
\]
set $Q_0=\E_S\prod_ih_i(Y_i)$ and $Q_1=\E_S\prod_i(Dh_i)(Y_i)$. Take
independent $Y,Y'\sim N(0,S)$ and put $U=(Y+Y')/\sqrt2$ and
$V=(Y-Y')/\sqrt2$. Then $U$ and $V$ are independent, each $N(0,S)$. They satisfy
\begin{equation}\label{eq:q0-squared}
Q_0^2=\E_U\E_V\prod_ig_{i,U_i}(V_i),\qquad
g_{i,u}(v)=h_i\Bigl(\frac{u+v}{\sqrt2}\Bigr)h_i\Bigl(\frac{u-v}{\sqrt2}\Bigr).
\end{equation}
The exponential part cancels in $v$, leaving for $\sqrt2\,b_i-u>0$
\begin{equation}\label{eq:interval-indicator}
g_{i,u}(v)=\mu_i^{-2}e^{\sqrt2\,a_iu}\,\ones_{\{|v|\le\sqrt2\,b_i-u\}},
\end{equation}
a symmetric interval indicator up to a positive factor. The set
$E=\{u_i<\sqrt2\,b_i,\ 1\le i\le n+1\}$ has positive $U$-probability,
because all $b_i>0$ and $U$ concentrates near the origin with positive
probability. On $E$, Lemma~\ref{lem:rectangle} applied to $V\sim N(0,S)$
with radii $\sqrt2\,b_i-u_i>0$ gives
\[
\E_V\prod_ig_{i,u_i}(V_i)>\prod_i\int g_{i,u_i}\,d\gamma_1=\prod_i(Dh_i)(u_i)
\qquad(u\in E).
\]
Off $E$ at least one radius is nonpositive. A negative radius kills the
corresponding factor identically. A zero radius gives zero on both sides
because $V_i$ has no atoms. Writing
$\Delta(u)=\E_V\prod_ig_{i,u_i}(V_i)-\prod_i(Dh_i)(u_i)$, the function
$\Delta$ is measurable, bounded by $2\prod_i\|h_i\|_\infty^2$, vanishes on
$E^c$ and is strictly positive on $E$. Therefore
\begin{equation}\label{eq:deficit}
Q_0^2-Q_1=\E\,\Delta(U)>0 .
\end{equation}
Each $Dh_i$ is bounded, log-concave by Pr\'ekopa's theorem
\cite{Prekopa1973}, has unit Gaussian integral and zero Gaussian first
moment. Proposition~\ref{prop:product} therefore applies to the $Dh_i$ and gives
$Q_1\ge 1$. With \eqref{eq:deficit}, $Q_0^2>1$, hence $Q_0>1$.
Unnormalizing yields \eqref{eq:strict-product}. The argument is not
circular, because Proposition~\ref{prop:product} is Mulgund's independently
proved non-strict theorem and strictness is injected only in the first
self-convolution step.
\end{proof}

\section{Proofs of the Main Results}

The three theorems and the corollary are proved in turn. Each proof
assembles the two inequalities of the previous section with the inputs
quoted in Section~IV.

\subsection{Theorem~\ref{thm:threshold}}

Fix a finite $c$. Proposition~\ref{prop:tilt} supplies finite $s_i$ with
$a_i=r(s_i)>0$ and $b_i=H(s_i)>0$, the positivity of $b_i$ coming from the
range $(0,\infty)$ of $H$. Define
$f_i(z)=e^{a_iz}\ones_{(-\infty,b_i]}(z)$. Equation \eqref{eq:tilt-integrals}
then gives $\int f_i\,d\gamma_1=e^{a_i^2/2}\Phi(s_i)$ and
$\int zf_i\,d\gamma_1=0$. If $R\ne I$, Lemma~\ref{lem:strict-product} gives
\[
\E\bigl[e^{a^{\mathsf T}X}\ones_{\{X\le b\}}\bigr]
>e^{\|a\|^2/2}\prod_i\Phi(s_i).
\]
Applying Proposition~\ref{prop:com} with $F=\ones_{\{x\le b\}}$ and using
$b-Ra=s+a-Ra=c\ones$ from \eqref{eq:alignment},
\[
\Prob\{X\le c\ones\}
=e^{-a^{\mathsf T}Ra/2}\,\E\bigl[e^{a^{\mathsf T}X}\ones_{\{X\le b\}}\bigr]
>\exp\Bigl\{\sum_i\log\Phi(s_i)+\tfrac12a^{\mathsf T}(I-R)a\Bigr\}
\ge\Phi(c)^{n+1},
\]
the last step by \eqref{eq:alignment}. The change-of-measure factor is
strictly positive. Strictness survives. If $R=I$ the coordinates are
independent and equality holds at every $c$. \hfill$\blacksquare$

\begin{remark}\label{rem:variational}
The variational inequality in \eqref{eq:alignment} alone cannot prove the
theorem. Its equality case is $R\ones=\ones$. For $n\ge 3$ the condition
$R\ones=\ones$ does not force $R=I$. The exclusion of all remaining
matrices is carried entirely by Lemma~\ref{lem:strict-product}.
\end{remark}

\subsection{Theorem~\ref{thm:mgf}}

Let $M_X=\max_iX_i$ and $M_Z=\max_iZ_i$ with $Z\sim N(0,I)$. Both
exponential moments are finite since $e^{\mu M}\le\sum_ie^{\mu X_i}$. If
$R\ne I$, Theorem~\ref{thm:threshold} gives
$F_X(u)>F_Z(u)$ for every $u$, where $F$ denotes the distribution function
of the corresponding maximum. The tail representation gives
\[
\E e^{\mu M_Z}-\E e^{\mu M_X}
=\lim_{\varepsilon\downarrow 0}\int_\varepsilon^\infty
\Bigl(F_X\Bigl(\frac{\log t}{\mu}\Bigr)-F_Z\Bigl(\frac{\log t}{\mu}\Bigr)\Bigr)dt>0
\]
for every $\mu>0$, the strict sign visible on any compact
$[a,b]\subset(0,\infty)$ where the integrand is everywhere positive.

For a general correlation matrix $G$ put $R=\alpha G+J/(n+1)$, take
$B\sim N(0,1/n)$ independent of $\xi\sim N(0,G)$ and set
$X_i=\sqrt{\alpha}(\xi_i+B)$, so that $X\sim N(0,R)$ and, for
$\mu=\lambda/\sqrt{\alpha}$,
\begin{equation}\label{eq:mgf-transfer}
\E e^{\mu M_X}=e^{\lambda^2/(2n)}\,\E e^{\lambda\max_i\xi_i}.
\end{equation}
The same construction for $G_\Delta$ produces exactly $R=I$. Since
$\alpha>0$, $R=I$ if and only if $G=G_\Delta$. Combining
\eqref{eq:mgf-transfer} with the strict comparison above proves the theorem.
\hfill$\blacksquare$

\subsection{Theorem~\ref{thm:coding}}\label{ssec:coding}

The two parts of the theorem are proved separately. The equal-energy case
follows from Theorem~\ref{thm:mgf} once the Gram matrix is identified.
The finite-energy case reduces to it after an exact sufficient statistic
removes the infinite blocklength.

\paragraph{Equal energy}

For unit signals the Bayes-optimal probability of correct decoding is given
by \eqref{eq:pc-identity}, whose prefactor does not depend on the code. By
Theorem~\ref{thm:mgf}
equality at any $\lambda>0$ holds if and only if $G=G_\Delta$. The
condition $G=G_\Delta$ forces
$\langle x_i,x_j\rangle=-1/n$ for $i\ne j$ and $\sum_ix_i=0$. Fixing
reference vertices $u_i$ with the same Gram matrix, the correspondence
$\sum_ia_iu_i\mapsto\sum_ia_ix_i$ is well defined and inner-product
preserving. After completing orthonormal bases it extends to an
orthogonal map of the ambient space. The code is therefore the centered regular simplex up
to relabeling and ambient isometry. If $d<n$ then
$\mathrm{rank}\,G\le d<n=\mathrm{rank}\,G_\Delta$. Equality is therefore
impossible and the bound is strict at every $\lambda$. Finally, on the
regular simplex the maximum-likelihood scores of two distinct messages tie
only on a finite union of hyperplanes, a null set under every output law. Any Bayes-optimal decoder agrees almost everywhere with maximum-likelihood
decoding.

\paragraph{Finite total energy}

The infinite-sequence model is reduced to finitely many dimensions by an
exact sufficient statistic.

\begin{lemma}\label{lem:sufficient}
In the model of Theorem~\ref{thm:coding}(b) with codewords
$c_i\in\ell_2$, let $D=\mathrm{span}\{c_i-c_1\}$, $r=\dim D$, and let
$u_1,\dots,u_r$ be an orthonormal basis of $D$. The limits
$T_j(Y)=\lim_N\sum_{k\le N}u_{j,k}Y_k$ exist almost surely and in $L^2$
under every message law, are jointly Gaussian with
$T(Y)\mid\{I=i\}\sim N((\langle c_i,u_j\rangle)_j,\sigma^2I_r)$, and $T$ is
sufficient; the sequence experiment and the $r$-dimensional Gaussian shift
experiment have the same optimal probability of correct decoding.
\end{lemma}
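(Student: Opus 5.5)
The proof reduces the infinite-sequence model to a finite Gaussian shift experiment in three steps. First, show that the coordinate functionals $T_j$ exist. Second, identify the law of $T$. Third, show that $T$ is sufficient and conclude that the optimal probability of correct decoding is the same in both experiments. Throughout, $Y_k=c_{I,k}+\sigma Z_k$ with $Z_k$ i.i.d.\ $N(0,1)$ and $\sigma^2=N_0/2$. We also use that each $u_j\in D$ is a finite linear combination of the $c_i-c_1$, so $u_j\in\ell_2$.

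\emph{Existence and law.} Under message $i$ we split
\[
\sum_{k\le N}u_{j,k}Y_k=\sum_{k\le N}u_{j,k}c_{i,k}+\sigma\sum_{k\le N}u_{j,k}Z_k .
\]
The deterministic part converges to $\langle c_i,u_j\rangle$ by Cauchy--Schwarz in $\ell_2$. The noise part is a sum of independent centered terms whose variances sum to $\sigma^2\|u_j\|^2<\infty$. It therefore converges in $L^2$ by orthogonality, and almost surely by Kolmogorov's two-series (or martingale) theorem. Jointly for $j\le r$, the limit vector is an $L^2$ limit of Gaussian vectors, hence Gaussian. Its covariance is $\sigma^2\langle u_j,u_l\rangle=\sigma^2\delta_{jl}$. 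This gives $T(Y)\mid\{I=i\}\sim N((\langle c_i,u_j\rangle)_j,\sigma^2I_r)$. All message laws share one null set off which the limits exist, because the laws differ only by the deterministic shift.

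\emph{Sufficiency.} We decompose $c_i=c_1+(c_i-c_1)$, where $c_i-c_1\in D$ is determined by $T$-coordinates. Choose an orthonormal basis of $\ell_2$ extending $u_1,\dots,u_r$, take the remaining coordinates $T'_\ell=\langle e_\ell,Y\rangle$ (defined as above), and project $c_1$ onto $D^\perp$. Then under every message, $T'$ equals $(\langle c_1,e_\ell\rangle)_\ell$ plus i.i.d.\ noise, independent of $T$, with a law that does not depend on $i$. By the Kolmogorov isometry, the map $Y\mapsto(T,T')$ is a measurable bijection of the sequence experiment onto this coordinate experiment, up to null sets. In the latter, the likelihood ratio between messages $i$ and $1$ is a function of $T$ alone. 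This follows from Cameron--Martin/Kakutani applied to the finite shift $c_i-c_1\in D$, with density $\exp\{\langle T,\mu_i-\mu_1\rangle/\sigma^2-(\|\mu_i\|^2-\|\mu_1\|^2)/(2\sigma^2)\}$, where $\mu_i=(\langle c_i,u_j\rangle)_j$. Factorization then yields sufficiency. For a finite message set, the Bayes-optimal probability of correct decoding is $\frac1m\E\max_i(\text{posterior density})$ computed from any dominating measure, and this quantity depends only on the likelihood ratios. Hence both experiments have the same optimal value. Conversely, any decoder based on $T$ is a decoder for $Y$, so the value can only coincide.

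\emph{Main obstacle.} The delicate step is the measure-theoretic sufficiency in infinite dimensions. We must show that the message laws on $\R^{\mathbb N}$ are mutually absolutely continuous and that their densities factor through $T$. I would handle this directly with the Cameron--Martin theorem for the i.i.d.\ Gaussian product measure, which applies because $c_i-c_1\in\ell_2$. The Radon--Nikodym derivative $\exp\{\langle c_i-c_1,Y-c_1\rangle_{\sim}/\sigma^2-\|c_i-c_1\|^2/(2\sigma^2)\}$ involves the stochastic inner product $\langle c_i-c_1,\cdot\rangle_\sim$. Since $c_i-c_1=\sum_j\langle c_i-c_1,u_j\rangle u_j$ is a finite combination of the $u_j$, this inner product is a linear function of $T$ almost surely. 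That makes the factorization exact with no approximation.
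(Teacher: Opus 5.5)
Your proposal is correct. The existence-and-law part matches the paper's appendix essentially step for step; the difference is in how sufficiency is obtained.

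The paper does not quote Cameron--Martin. It proves the density directly. The finite-$N$ likelihood ratios $L_{i,N}$ form a nonnegative mean-one martingale under $P_1$ with $\E_1L_{i,N}^2\le\exp(\|c_i-c_1\|^2/\sigma^2)$, so they converge in $L^1$ to an explicit function $L_i$ of $T$. The identity $P_i(B)=\E_1[\ones_BL_i]$ on cylinders then extends by the monotone class theorem, and the posteriors $L_i/\sum_kL_k$ depend on $T$ only. This is exactly the Kakutani/Cameron--Martin statement your last paragraph invokes, proved in place. So your ``main obstacle'' route is the paper's argument with the martingale step replaced by a citation.

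Your orthonormal-completion route is a genuinely different decomposition. It writes the law of $(T,T')$ under message $i$ as $N(\mu_i,\sigma^2I_r)\otimes Q$ with $Q$ independent of $i$. The likelihood ratio is then the finite-dimensional one by the product structure alone, and Cameron--Martin is not needed at all on that route. What it does need, and what ``Kolmogorov isometry'' only names, is that $(T,T')$ recovers $Y$ modulo null sets common to all message laws. One line closes this. With $(f_\ell)$ the completed basis, $Y_k=\sum_\ell f_{\ell,k}\langle f_\ell,Y\rangle$ almost surely under every $P_i$. The signal series sums to $c_{i,k}$, and the noise terms are independent with variances $\sigma^2f_{\ell,k}^2$ summing to $\sigma^2$ by Parseval, so each $Y_k$ is a fixed Borel function of $(T,T')$.

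The two sufficiency arguments in your write-up are therefore redundant: either one alone finishes the proof. The paper's route is self-contained and uses only $L^2$-bounded martingale convergence. Yours is shorter given standard facts about Gaussian product measures, and it makes the ancillarity of the $D^\perp$ coordinates explicit. The degenerate case $r=0$, which the paper treats separately, is trivial in both.
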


\begin{proof}
See Appendix~\ref{app:sufficient}.
\end{proof}

For the assembly, let $S_{\mathrm{reg}}(r)$ denote the optimal correct
probability of the centered regular simplex of circumradius $r$. It
evaluates to
\begin{equation}\label{eq:sreg}
S_{\mathrm{reg}}(r)=p_{n+1}\Bigl(r\sqrt{\tfrac{2(n+1)}{nN_0}}\Bigr),\qquad
p_{n+1}(a)=\E\,\Phi(W+a)^n,\ W\sim N(0,1),
\end{equation}
and $S_{\mathrm{reg}}$ is strictly increasing, by differentiation under the
integral.

Let $C=\{c_i\}$ be an arbitrary feasible codebook. Augment each codeword by
one extra coordinate, $\tilde c_i=(c_i,q_i)$ with $q_i=\sqrt{E-\|c_i\|^2}$,
so that $\|\tilde c_i\|^2=E$. The augmented codebook $\tilde C$ is feasible
because channel uses are unrestricted and a fixed coordinate rearrangement
leaves the i.i.d.\ noise law invariant. It satisfies
$S(C)\le S(\tilde C)$ because a decoder may ignore the extra coordinate.
Let $\tilde D=\mathrm{span}\{\tilde c_i-\tilde c_1\}$, write
$\tilde c_i=w+v_i$ with $w$ the common projection onto $\tilde D^\perp$ and
$v_i\in\tilde D$. All $v_i$ then share one radius $r_\ast$ with
$r_\ast^2=E-\|w\|^2$. If $r_\ast=0$ the effective means coincide and the
correct probability is $1/(n+1)$, below the regular-simplex value. For
$r_\ast>0$, normalizing $x_i=v_i/r_\ast$ and applying
\eqref{eq:pc-identity} inside $\tilde D$ with $\lambda=r_\ast/\sigma$,
Theorem~\ref{thm:mgf} gives
$S(\tilde C)\le S_{\mathrm{reg}}(r_\ast)\le S_{\mathrm{reg}}(\sqrt{E})$,
the second inequality strict for $r_\ast<\sqrt{E}$ by \eqref{eq:sreg}.
Every feasible codebook therefore satisfies
$S(C)\le S_{\mathrm{reg}}(\sqrt{E})$. The regular simplex of circumradius
$\sqrt{E}$ is feasible and attains this value. Hence
$P^*_c(E,n+1)=S_{\mathrm{reg}}(\sqrt{E})$, attained by the regular simplex.

Now let $C$ attain the optimum. Every inequality above is then an
equality. The case $r_\ast=0$ is excluded because
$1/(n+1)<S_{\mathrm{reg}}(\sqrt{E})$ for $E>0$. Strict monotonicity of
$S_{\mathrm{reg}}$ forces $r_\ast=\sqrt{E}$, hence $w=0$. Equality in the
MGF comparison forces $\mathrm{Gram}(x_1,\dots,x_{n+1})=G_\Delta$ and
$\sum_ix_i=0$ by Theorem~\ref{thm:mgf}. Then
$\sum_i\tilde c_i=(n+1)w+r_\ast\sum_ix_i=0$. The last coordinate reads
$\sum_iq_i=0$ with every $q_i\ge 0$. Hence all $q_i=0$. Every codeword uses full energy and the codebook satisfies the
stated Gram conditions. Conversely any such codebook is a centered regular
simplex of circumradius $\sqrt{E}$ and attains $S_{\mathrm{reg}}(\sqrt{E})$.
A common translation is excluded by averaging the energy constraint.
\hfill$\blacksquare$

\subsection{Corollary~\ref{cor:smwc}}

Write $K=\mathrm{conv}\{y_1,\dots,y_{n+1}\}\subset B_2^n$ and
$\sigma_i=\sqrt{1-\|y_i\|^2}$. Lifting $x_i=(y_i,\sigma_ie_i)$ to unit
vectors and conditioning on $g\sim N(0,I_n)$, the two-case Jensen argument,
$\E\max(Y,C)-\max(a_j,C)$ equals $\E(Y-C)^+>0$ when $a_j\le C$ and
$\E(C-Y)^+>0$ when $a_j>C$ for the nondegenerate normal $Y=a_j+\sigma_jW_j$,
shows that any vertex strictly inside the ball produces a strict gap in
expected Gaussian maxima. Mean-width equality therefore forces all vertices onto
the sphere. On the sphere, put $G=(\langle y_i,y_j\rangle)$ and
$X_i=\sqrt{\alpha}(\xi_i+B)$ with $B\sim N(0,1/n)$ independent of $\xi$,
as in \eqref{eq:mgf-transfer}. If
$G\ne G_\Delta$ then $R\ne I$ and Theorem~\ref{thm:threshold} gives a strict
gap between the distribution functions of the maxima at every point, hence
$\E M_X<\E M_Z$ by integrating the gap, contradicting mean-width equality
through the polar decomposition $\E h_K(g)=\E\|g\|\,w(K)/2$. Thus
$G=G_\Delta$. Gram rigidity as in Section~\ref{ssec:coding}(a) identifies
$K=U\Delta_n$. Orthogonal images conversely attain equality.
\hfill$\blacksquare$

\section{Conclusion}

The equal-energy problem for $n+1$ signals is now closed in a strong sense.
To the optimality proved in \cite{Mulgund2026}, this paper adds the
equality cases. Away from the regular simplex no equality occurs at any
finite threshold, at any positive moment-generating parameter, or under any
positive energy budget. Steiner's counterexample \cite{Steiner1994}
delimits the scope, showing that under an average-energy constraint the
regular simplex is not optimal and no such rigidity can hold.

Boundary conditions are sharp and cannot be dropped. The threshold must be
finite, the moment-generating parameter strictly positive, the energy
strictly positive, since each formal endpoint produces a trivial equality.
All arguments admit singular correlation matrices. The case $|R_{ij}|=1$
is handled inside the strict rectangle lemma.

Two directions remain. The pair-tail structure of the strict rectangle
inequality is explicit and suggests a quantitative stability estimate, a
deficit bound in terms of the distance from $G$ to $G_\Delta$. For
$m\ne n+1$ signals the optimal configuration is unknown in general, the
rank-constrained problem posed in \cite{Mulgund2026} included. The
biorthogonal conjecture for $m=2n$ remains open.

\appendix[Proof of Lemma~\ref{lem:sufficient}]
\label{app:sufficient}

If $r=0$ all codewords coincide and the empty statistic is sufficient. Assume $r\ge 1$. Under message $i$,
$\sum_{k\le N}u_{j,k}Y_k=\sum_{k\le N}u_{j,k}c_{i,k}+\sum_{k\le N}u_{j,k}Z_k$.
The signal sum converges to $\langle c_i,u_j\rangle$ since $c_i,u_j\in\ell_2$.
The noise sum is a series of independent centered variables with
$\sum_k\mathrm{Var}(u_{j,k}Z_k)=\sigma^2<\infty$. It converges almost
surely and in $L^2$. The convergence set $A_j\subset\R^{\mathbb N}$ is Borel
by the Cauchy criterion. The intersection $A=\bigcap_jA_j$ carries full
measure under every message law. Setting $T_j=0$ off $A$ produces one Borel statistic
shared by all hypotheses. Finite linear combinations of the partial noise
sums converge in $L^2$. Their characteristic functions converge to
Gaussian ones. Cram\'er--Wold identifies the limit law as
$N((\langle c_i,u_j\rangle)_j,\sigma^2I_r)$, the covariance coming from
orthonormality of the $u_j$.

For sufficiency write $d_i=c_i-c_1=\sum_j\beta_{ij}u_j$ and, on the first
$N$ coordinates,
\begin{equation}\label{eq:lr-finite}
L_{i,N}(Y)=\exp\Bigl\{\frac{1}{\sigma^2}\sum_{k\le N}d_{i,k}(Y_k-c_{1,k})
-\frac{1}{2\sigma^2}\sum_{k\le N}d_{i,k}^2\Bigr\},
\end{equation}
the likelihood ratio of message $i$ against message $1$. Under $P_1$ the
sequence $(L_{i,N})_N$ is a nonnegative mean-one martingale with
$\E_1L_{i,N}^2=\exp(\sigma^{-2}\sum_{k\le N}d_{i,k}^2)
\le\exp(\|d_i\|^2/\sigma^2)$. It therefore converges in $L^2$ and in $L^1$,
and on $A$ its almost-sure limit is
\begin{equation}\label{eq:lr-limit}
L_i(Y)=\exp\Bigl\{\frac{1}{\sigma^2}\sum_{j}\beta_{ij}T_j(Y)
-\frac{\|c_i\|^2-\|c_1\|^2}{2\sigma^2}\Bigr\},
\end{equation}
a function of $T$ alone. For every cylinder event $B$ the $L^1$ convergence
gives $P_i(B)=\E_1[\ones_BL_i]$. The monotone class theorem extends this
identity to the whole product $\sigma$-algebra. Therefore
$dP_i/dP_1=L_i$. Posterior probabilities under the uniform prior are
$\pi_i=L_i/\sum_kL_k$, functions of $T$. Any decoder $\delta$ satisfies
$\Prob\{\delta(Y)=I\}\le\E\max_i\pi_i(T)$. The bound is attained by the
measurable selection $\hat\delta(t)=\min\arg\max_i\pi_i(t)$. The
selection $\hat\delta$ depends on $T$ only. A randomized decoder is a convex combination bounded the
same way. The two experiments therefore share their optimal probability of
correct decoding.

\bibliographystyle{IEEEtran}
\bibliography{references}

\end{document}